\documentclass[11pt,a4paper]{article}

\usepackage{geometry} 
\usepackage{setspace}
\usepackage[utf8]{inputenc}
\usepackage[T1]{fontenc}
\usepackage{lmodern}
\usepackage{microtype}

\usepackage{amsmath,amssymb,amsfonts,amsthm}
\usepackage{mathtools}
\DeclareMathOperator{\sech}{sech}
\DeclareMathOperator{\arccot}{arccot}
\usepackage{graphicx}
\usepackage{subcaption}
\usepackage{graphicx}
\usepackage{caption}
\usepackage{subcaption}
\usepackage{booktabs}
\usepackage{tabularx}
\usepackage{array}
\usepackage{multirow}
\usepackage{multicol}
\usepackage{float}
\usepackage{algorithm}
\usepackage{algpseudocode}
\usepackage{booktabs}

\usepackage{xcolor}
\usepackage{pifont}

\newcommand{\cmark}{\textcolor{green!60!black}{\ding{51}}}
\newcommand{\xmark}{\textcolor{red!75!black}{\ding{55}}}
\newcolumntype{Y}{>{\centering\arraybackslash}X}

\usepackage{algorithm}
\usepackage{algpseudocode}
\usepackage{listings}

\definecolor{mild}{rgb}{1,0.98,0.8}
\usepackage{comment}
\usepackage{todonotes}

\usepackage{enumitem}
\usepackage{url}
\usepackage{natbib}

\usepackage{hyperref}

\hypersetup{
  colorlinks=true,
  linkcolor=blue,
  citecolor=black,
  urlcolor=blue
}

\newtheorem{theorem}{Theorem}[section]
\newtheorem{lemma}[theorem]{Lemma}
\newtheorem{corollary}[theorem]{Corollary}
\newtheorem{proposition}[theorem]{Proposition}

\theoremstyle{definition}

\theoremstyle{remark}

\numberwithin{equation}{section}
\numberwithin{figure}{section}
\numberwithin{table}{section}

\usepackage[section]{placeins}
\usepackage{flafter}

\title{\bf Structure-Informed Neural Operators for Long-Time Prediction of Parametric Hamiltonian PDEs}
\author{
Victory C. Obieke\thanks{Corresponding author: Victory C. Obieke,  Oregon State University. Email: \texttt{obiekev@oregonstate.edu}}\\
Oregon State University\\
\texttt{obiekev@oregonstate.edu}
\and
Christopher Chukwuemeka\\
University of Houston\\
\texttt{ccchukw2@cougarnet.uh.edu}
\and
Emmanuel E. Oguadimma\\
Oregon State University\\
\texttt{oguadime@oregonstate.edu}
}
\begin{document}
\date{}
\maketitle

\begin{abstract}
Hamiltonian partial differential equations (PDEs) often exhibit long-time dynamics governed by conserved quantities such as mass, momentum, and Hamiltonian energy. Standard Fourier neural operators (FNOs) provide efficient data-driven approximations of solution operators, but may not preserve these invariants during autoregressive rollout, and can develop drift in conserved quantities, phase error, and loss of qualitative accuracy. We propose an energy-projection Fourier neural operator (EP-FNO), a structure-informed operator learning architecture that combines a residual FNO time-stepping update with an invariant projection for long-time prediction of parametric Hamiltonian PDEs. We also provide a theoretical analysis showing that EP-FNO can approximate operators
associated with PDEs efficiently, we also suggest a stability estimate.
 We evaluate the approach on the Zakharov--Kuznetsov, Kadomtsev--Petviashvili, and sine--Gordon equations. Numerical experiments show that the projected model improves long-time stability, and gives more accurate propagation of soliton and coherent wave structures compared with a standard FNO baseline. Our results demonstrate that invariant projection improves the reliability of learned surrogates for long-time Hamiltonian PDE simulation.
\end{abstract}

\section{Introduction}

Hamiltonian partial differential equations (PDEs) arise in computational science, including nonlinear wave propagation \cite{whitham1974linear,drazin1989solitons,sulem1999nonlinear}, plasma physics \cite{morrison2005hamiltonian}, fluid dynamics and magnetohydrodynamics \cite{morrison1980noncanonical,morrison1998hamiltonian}, and nonlinear optical media \cite{boyd2020nonlinear,agrawal2019nonlinear}.
These systems often support coherent wave structures, such as solitons and localized waves, whose long-time dynamics are governed by conserved quantities including mass, momentum, and Hamiltonian energy. Accurately predicting these dynamics over long time intervals remains challenging, since small numerical or model errors can accumulate during time integration, producing phase errors, artificial damping, or drift in conserved quantities.

Numerical methods such as finite-difference and FDTD schemes
\cite{Yee1966, bokil2018highFDTD, gibson2026analysis},
finite element methods \cite{banks2009finiteElement,monk2003finiteElement},
spectral and pseudospectral methods \cite{trefethen2000spectral,fornberg1996pseudospectral},
and multisymplectic discretizations
\cite{bridges2001multisymplecticSpectral,chen2001multi,chen2011multi},
have long been used to simulate Hamiltonian and dispersive wave systems. Beyond formal discretization accuracy, the effectiveness of these methods often depends on how well they reflect the structure of the underlying differential equation, including symmetries, invariant transformations, and conserved quantities \cite{oguadimma2026foundational}.
Structure-preserving schemes are therefore especially important because they are designed to respect the geometric properties of the continuous model
\cite{obieke2026energy,chen2011multi}.
However, high-fidelity simulations can still be computationally expensive, especially when solutions must be generated repeatedly for many initial conditions or parameter values. This has led to growing interest in machine-learning methods that approximate the evolution of nonlinear systems from data. In particular, there has been significant progress in incorporating Hamiltonian and geometric structure into learned dynamical models. Hamiltonian neural networks learn an energy function and use Hamilton's equations to evolve the state \cite{greydanus2019hamiltonian}. SympNets and other related architectures instead learn maps that are constrained to preserve the symplectic structure of Hamiltonian dynamics \cite{jin2020sympnets,burby2021normal}. Other approaches incorporate physical constraints or conservation laws through penalty terms, variational formulations, or physics-informed residual losses \cite{raissi2019physics,kissas2020machine,chen2022kam,zhang2022vanishing}.

In contrast to learning maps between finite-dimensional vectors \cite{li2020fourier,kovachki2023neural}, neural operators provide a data-driven framework for learning solution operators between infinite-dimensional function spaces. Among existing operator-learning methods, the Fourier neural operator (FNO) has become a standard architecture for learning PDE solution maps on regular grids, since it parameterizes integral kernels in Fourier space and uses spectral convolutions to capture global interactions efficiently \cite{li2020fourier}. Several extensions of the FNO framework have been proposed, including factorized Fourier neural operators \cite{tran2021factorized}, residual factorized Fourier neural operators \cite{dauner2024residual}, and geometry-aware Fourier neural operators \cite{li2023fourier} for more complex geometries. However, explicit enforcement of the invariant structure of the underlying PDE during autoregressive rollouts remains underexplored. As the predicted solution is repeatedly fed back into the model, small one-step errors can accumulate into long-time drift in conserved quantities as in fig ~\ref{fig:epfnointro}. 
\begin{figure}[htbp]
    \centering
    \includegraphics[width=0.9\linewidth]{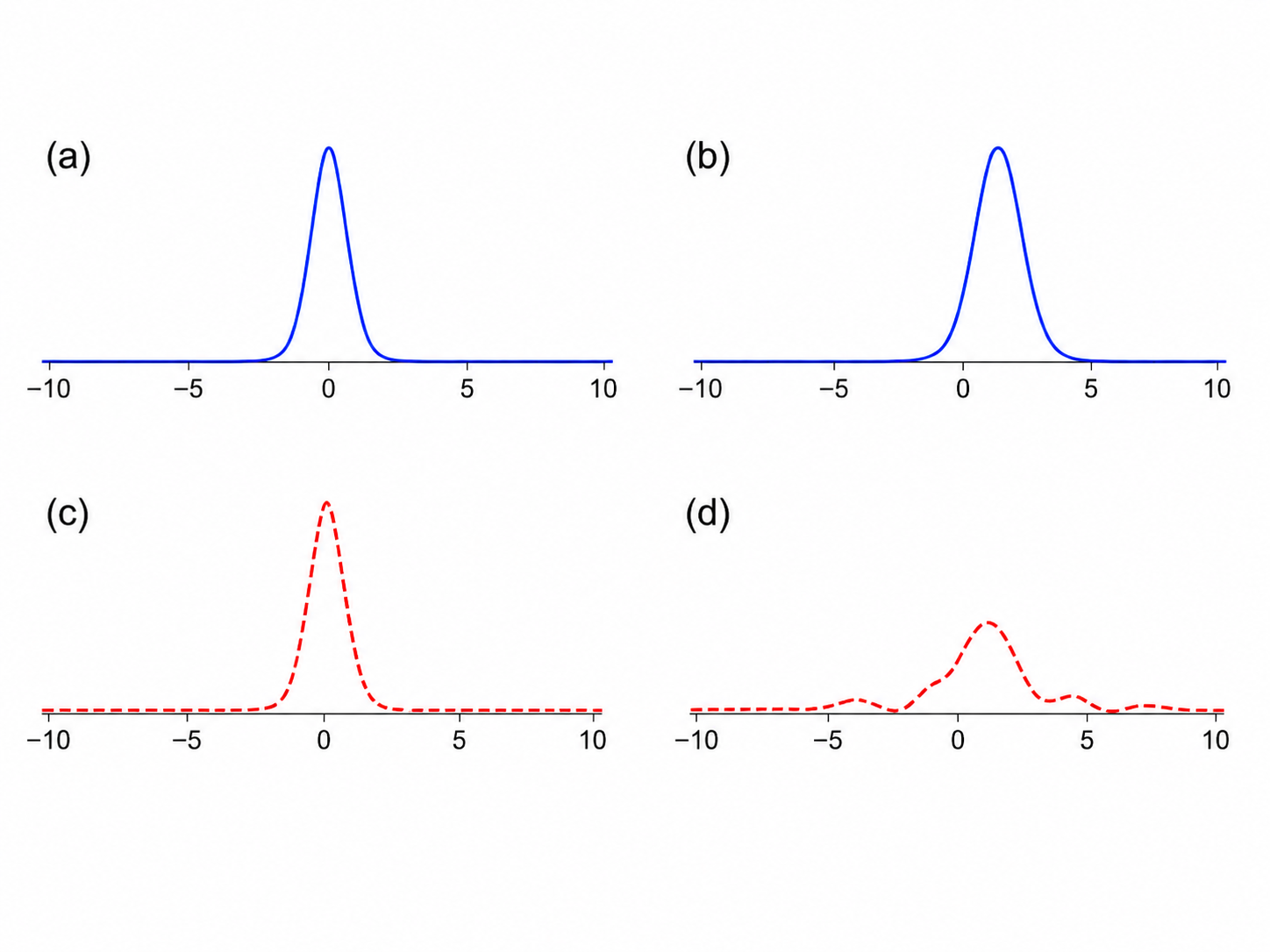}
    \caption{\textbf{Conceptual illustration of structure preservation in long-time soliton prediction.} (a) Initial soliton supplied to the structure preserving neural operator at (t=0). (b) At (t=2), the a structure preserving neural operator prediction retains a localized, coherent soliton profile. (c) The same initial soliton supplied to the standard FNO. (d) At (t=2), the FNO prediction exhibits amplitude loss, spreading, and oscillatory artifacts, illustrating degradation of the soliton structure during autoregressive rollout.}
    \label{fig:epfnointro}
\end{figure}

In this paper, we consider learning long-time solution operators for a family of Hamiltonian PDEs. We propose the energy-projection Fourier neural operator (EP-FNO), a structure-informed FNO architecture that combines a residual FNO time-stepping update with an invariant projection for long-time prediction of parametric Hamiltonian PDEs. The FNO algorithm learns the local solution increment from a history window of previous states, while the projection module corrects the prediction toward a discrete invariant
manifold. This projection step parallels residual-minimizing projection methods in reduced-order modeling, where a discrete correction is used to improve the reliability of an approximate time update \cite{carlberg2013gnat}.

The remainder of this paper is organized as follows. 
In Section~\ref{sec:problem_setting}, we formulate the long-time operator-learning problem for Hamiltonian PDEs and introduce the discrete invariant manifold used in the projection step. Section~\ref{sec:relwork} reviews related work on Fourier neural operators, Hamiltonian and symplectic learning, structure-preserving neural operators, physics-informed methods, and geometric numerical methods. 
In Section~\ref{sec:epfno}, we present the proposed EP-FNO architecture, including the neural-operator algorithm, Fourier operator layers, residual time-step update, invariant projection block, and stability estimates. Section~\ref{sec:numerical_experiments} reports numerical experiments for the Zakharov--Kuznetsov, Kadomtsev--Petviashvili, and sine--Gordon equations, together with ablation studies and cost--accuracy comparisons. Finally, Section~\ref{sec:discussion} summarizes the main findings and discusses possible extensions of the projection framework.

\section{Problem Setting}
\label{sec:problem_setting}
In this section, we consider the problem of learning solution operators for Hamiltonian evolution equations. Let \(\Omega\subset\mathbb{R}^d\) be a spatial domain and let $T>0$. We denote by \(U:\Omega\times(0,T]\to\mathbb{R}^m\) the solution of the system.  We assume that the governing equation can be written in first-order form as
\begin{equation}
\label{eq:first_order_pde}
\begin{aligned}
\partial_t U(x,t) &= \mathcal F[U](x,t),
&& (x,t)\in \Omega\times(0,T],\\
U(x,0) &= U_0(x),
&& x\in\Omega.
\end{aligned}
\end{equation}
where \(\mathcal F\) denotes the spatial differential operator associated with the system. We further assume that the flow generated by \eqref{eq:first_order_pde} admits \(q\) conserved functionals $\mathbf C(U)=\big(C_1(U),\ldots,C_q(U)\big)^\top$, meaning that, for each solution \(U(t)\), one has 
\begin{align}\label{derivative}
    \frac{d}{dt}C_i(U(t))=0, \qquad i=1,\ldots,q.
\end{align}
Thus, \(C_i(U(t))=C_i(U_0)\) along the exact evolution. These conserved
quantities may include mass, momentum, or Hamiltonian energy.

After using any desired choice of numerical method for the spatial discretizations of \eqref{eq:first_order_pde}, we let \(\mathbb{V}_h\) denote the discrete state space and let \(U_h^n\in \mathbb{V}_h\) be the approximation of \(U(\cdot,t_n)\), where \(t_n=n\Delta t,\, n\in \mathbb{N} \). We denote the discrete conserved quantities by $\mathbf C_h(U_h)=\big(C_{1,h}(U_h),\ldots,C_{q,h}(U_h)\big)^\top$. Given the input history $\mathbf U_h^n=\big(U_h^{n-T_{\rm in}+1},\ldots,U_h^n\big)$ where \(T_{\rm in}\) is the length of the history window, the solution operator is
\begin{align}
    \mathcal G^\dagger:\mathbf U_h^n \longmapsto U_h^{n+1}.
\end{align}
We approximate \(\mathcal G^\dagger\) by a learned neural operator. In the standard FNO model, the next state is predicted by the map $\widehat U_h^{n+1}=\mathcal G_{\theta}(\mathbf U_h^n),$  $\theta\in \Theta,$ for some finite-dimensional parameter space $\Theta$. In general,
\(\widehat U_h^{n+1}\) need not satisfy the discrete conservation laws. To enforce the invariant structure, we define the target invariant vector \(\mathbf c^\star=\mathbf C_h(U_h^n)\), and the corresponding discrete invariant manifold
\begin{align}
    \mathcal M=\big\{W_h\in \mathbb{V}_h:\mathbf C_h(W_h)=\mathbf c^\star\big\}.
\end{align}
The EP-FNO update is defined by
\begin{align}
U_h^{n+1}
=
\mathcal G_{\theta}^{\rm EP}(\mathbf U_h^n)
=
\Pi_{\mathbf c^\star}
\left(\mathcal G_{\theta}(\mathbf U_h^n)\right),
\end{align}
where \(\Pi_{\mathbf c^\star}\) denotes the projection map defined in Section~\ref{subsec:invariant_projection_block}. This projection corrects the FNO prediction so that it lies on, or closer to, the discrete invariant manifold \(\mathcal M\). Thus, the FNO algorithm provides the data-driven prediction, while the projection step reduces drift in the discrete invariants during autoregressive rollout.
\begin{figure}[H]
    \centering
    \includegraphics[width=0.9\linewidth]{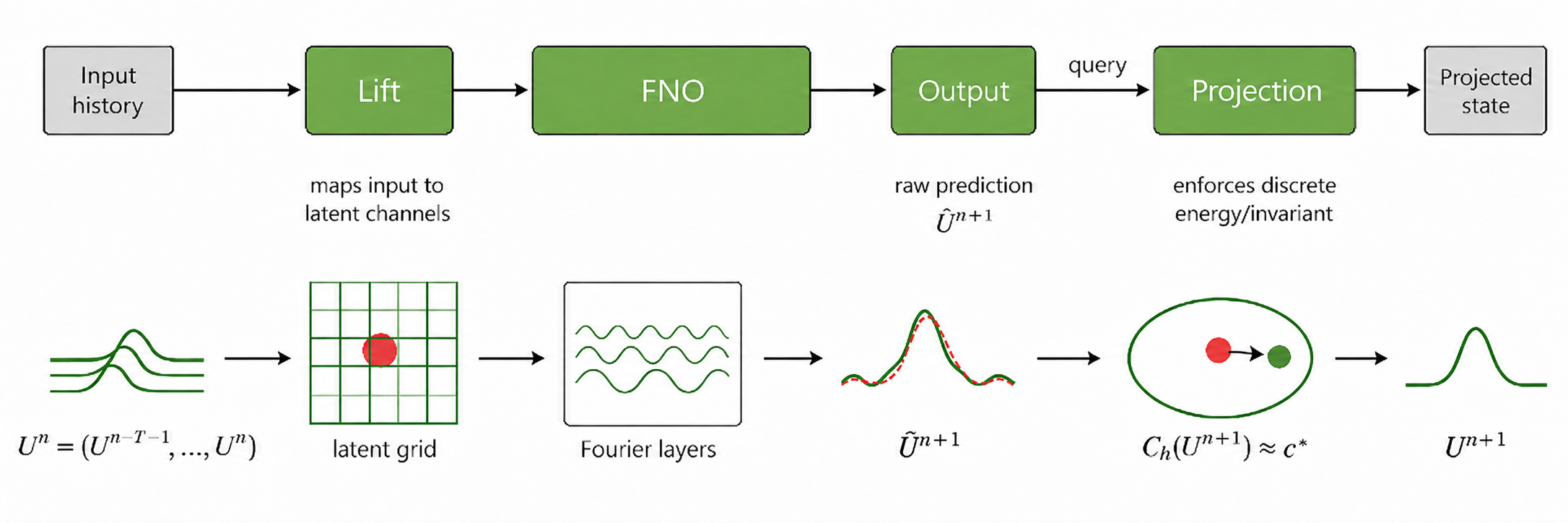}
\caption{Architecture of the proposed energy-projection Fourier neural operator (EP-FNO). The input history is lifted into a latent representation and evolved by an FNO algorithm to produce a prediction. The projection module then corrects this prediction so that the final state satisfies the discrete invariant.}
    \label{fig:epfno_architecture}
\end{figure}

The overall EP-FNO architecture is shown in Figure~\ref{fig:epfno_architecture}. The model
has two main components: (i) a residual FNO backbone that predicts the
next-step increment of the PDE state, and (ii) an invariant projection block that corrects the provisional prediction so that it remains close to the prescribed Hamiltonian energy level set. The computational cost and structure-preserving properties of FNO and EP-FNO are
summarized in Table~\ref{tab:fno_epfno_complexity_structure}, while the corresponding
stability properties are summarized in Table~\ref{tab:stability_fno_epfno}. 

\begin{table}[htbp]
\centering
\caption{Computational complexity and structure-preserving comparison of FNO and EP-FNO models.
Here \(N=N_xN_y\) is the number of spatial grid points, \(T\) is the rollout length, and rs denotes residual time stepper.}
\label{tab:fno_epfno_complexity_structure}
\begin{tabular}{lccc}
\hline
\textbf{Model} 
& \textbf{\(T\)-step cost} 
& \textbf{Invariant-Projection} \\
\hline
FNO 
& \(O(TN\log N)\) 
& \xmark  \\

EP-FNO without rs
& \(O(TN\log N)+O(TN)\) 
& \cmark \\

EP-FNO 
& \(O(TN\log N)+O(TN)\) 
& \cmark \\
\hline
\end{tabular}

\vspace{0.5em}
\begin{minipage}{0.92\textwidth}
\end{minipage}
\end{table}
\begin{table}[htbp]
\centering
\caption{Stability comparison of FNO and EP-FNO. Here rs denote residual time-stepper \eqref{subsec:residual_time_update}.}
\label{tab:stability_fno_epfno}
\begin{tabular}{lccc}
\hline
\textbf{Model}
& \textbf{Energy boundedness}
& \textbf{Long-time stability}
& \textbf{Residual stability} \\
\hline

FNO
& \xmark
& \xmark
& \xmark \\

EP-FNO without rs
& \cmark
& \cmark
& \xmark \\

EP-FNO
& \cmark
& \cmark
& \cmark \\

\hline
\end{tabular}

\vspace{0.5em}
\begin{minipage}{0.88\textwidth}
\end{minipage}
\end{table}

\section{Related Works}\label{sec:relwork}
\paragraph{Fourier neural operators and neural operator learning.}
The Fourier neural operator (FNO) \cite{li2020fourier,kovachki2023neural} is a standard approach for learning solution operators of PDEs on regular grids. By parameterizing the integral kernel in Fourier space, FNOs achieve resolution independence and excellent performance on a variety of problems including Navier--Stokes, Darcy flow, and Burgers' equation. Several extensions have been proposed, such as the factorized FNO \cite{tran2021factorized} and the geo-FNO \cite{li2023fourier} for irregular geometries. However, these architectures do not explicitly enforce conservation laws or Hamiltonian structure, which can lead to qualitative drift in long-time autoregressive rollout.

\paragraph{Hamiltonian and symplectic neural networks.}
For ordinary differential equations, the Hamiltonian neural network (HNN) \cite{greydanus2019hamiltonian} learns the Hamiltonian function and uses it to generate symplectic dynamics. SympNets \cite{jin2020sympnets} and their variants \cite{burby2021normal} directly parameterize symplectic maps. These methods guarantee exact conservation of the learned Hamiltonian (up to discretization) and have been extended to time-dependent Hamiltonians \cite{zhong2024symp} and port-Hamiltonian systems \cite{dipersio2025porthamiltonian}. 


\paragraph{Structure-preserving neural operators.}
Recent efforts aim to embed geometric structure into neural operators. The structure-preserving neural differential operator with embedded Hamiltonian constraints \cite{najera2023structure} introduces a symplectic latent representation for Hamiltonian systems, while the Lagrangian neural operator (LNO) \cite{cranmer2020lagrangian} learns a Lagrangian density from data. Another line of work uses penalty terms in the loss to approximately conserve energy or momentum \cite{kissas2020machine}, but this approaches does not guarantee exact preservation. The projection method proposed in EP-FNO is conceptually closer to the orthogonal projection used in projection-based model reduction \cite{carlberg2013gnat}. 

\paragraph{Physics-informed learning for Hamiltonian PDEs.}
Physics-informed neural networks (PINNs) \cite{raissi2019physics, obieke2025structure} can incorporate Hamiltonian structure by minimizing the residual of the Hamilton--Jacobi or the variational form \cite{chen2022kam, zhang2022vanishing}.  While PINNs excel at inverse problems and irregular domains, they are typically slower than neural operators for repeated forward simulations and do not directly produce a discrete-time map. EP-FNO is designed for fast autoregressive rollout and inherits the efficiency of spectral convolutions.


\paragraph{Projection methods for physical constraints.}
A closely related direction enforces physical consistency by correcting learned
outputs after prediction. General output-projection frameworks formulate the
correction as a constrained post-processing problem, where the prediction is
moved onto a manifold defined by prescribed physical laws
\cite{valente2025physics}. In operator learning, hard-constraint methods project
surrogate outputs onto spaces of functions satisfying prescribed differential
constraints, including efficient transformed-space projections for linear
constraints such as incompressibility \cite{duruisseaux2024hard}. Other recent
approaches introduce adaptive correction modules that modify neural-operator
outputs to enforce linear or quadratic conserved quantities \cite{liu2025adaptive}. Projection layers have also been used in conservative
PINNs and DeepONets by mapping candidate solutions of dynamical systems with first integrals onto the corresponding invariant manifold \cite{cardosobihlo2025exactly}. These works establish projection and correction
as important tools for scientific machine learning. EP-FNO differs by coupling a
residual FNO time-stepper with an explicit, sample-dependent damped projection
onto problem-dependent discrete Hamiltonian and mass manifolds during long
autoregressive rollout. This allows the correction to target nonlinear
Hamiltonian invariants in parametric two-dimensional wave PDEs, which makes the
specific EP-FNO formulation proposed here novel.


\section{Structure-Informed Fourier Neural Operator}
\label{sec:epfno}

We propose a structure-informed FNO for learning solution
operators associated with Hamiltonian evolution equations. The method combines
the FNO algorithm with a residual time-step update and a
structure-preserving projection module. The FNO backbone learns the local
increment of the PDE state from a history window of previous states, producing
a provisional prediction \(\widetilde U^{n+1}=U^n+\mathcal G_\theta(\mathbf U^n)\).
This provisional state is then passed to an invariant projection module, which
corrects the prediction so that it is consistent with prescribed discrete
quantities such as mass or Hamiltonian energy. In this way, the architecture
retains the approximation power of FNO while incorporating the geometric structure of the underlying PDE during autoregressive rollout.

\subsection{Neural Operator}
\label{subsec:neural_operator}

Let \(\Omega\subset\mathbb{R}^d\) denote the spatial domain. A neural operator maps an input function \(a:\Omega\to \mathbb{R}^{d_a}\) to an output function \(U:\Omega\to \mathbb{R}^{d_u}\). 
We write the operator as a composition of pointwise lifting $P(a)$, integral operator layers, and pointwise output projection $Q(.)$,
\begin{align}\label{op}
    \Psi_\theta=Q\circ K_L\circ \cdots \circ K_1\circ P.
\end{align}
The lifting map \(P\) sends the input function into a higher-dimensional latent space, \(v_0(x)=P(a)(x)\), \(x\in \Omega\), while the final map \(Q\) projects the last latent representation back to the physical output space, \(\widehat U(x)=Q(v_L)(x)\). The intermediate layers \(K_\ell\), \(\ell=1,\ldots,L\), are operator layers of the form \(v_\ell(x)=\int_\Omega \kappa_\ell(x,y)v_{\ell-1}(y)\,dy\), where \(\kappa_\ell\) is a learnable kernel. Nonlinear activation functions are
applied between layers.

\subsection{Fourier Operator Block}
\label{subsec:fourier_operator_block}

We describe the Fourier operator block used in the FNO algorithm. From \eqref{op}, the lifted latent representation \(v_0\), evaluated on a regular grid, is passed through a sequence of Fourier layers. Let the latent functions be defined on the \(d\)-dimensional torus \(\mathbb T^d\). We define an integral operator with kernel
\(\kappa \in L^2(\mathbb T^d;\mathbb R^{n\times m})\) as the mapping
\(\mathcal C:L^2(\mathbb T^d;\mathbb R^m)\to L^2(\mathbb T^d;\mathbb R^n)\)
given by
\[
    \mathcal C(v)=\mathcal F^{-1}\left(\mathcal F(\kappa)\cdot \mathcal F(v)\right),
\]
with \(v\in L^2(\mathbb T^d;\mathbb R^m)\), where \(\mathcal F\) and \(\mathcal F^{-1}\) denote the Fourier transform and inverse Fourier transform, respectively. 
The Fourier layer is then defined by
\[
    K(v)(x)=\sigma\left(Wv(x)+\mathcal C(v)(x)\right), \quad x\in\mathbb T^d
\]
where \(W\in\mathbb R^{n\times m}\) is a learnable pointwise linear map and
\(\sigma\) is a nonlinear activation function.

Applying \(L\) such Fourier layers gives
\(v_L=K_L\circ K_{L-1}\circ \cdots \circ K_1(v_0)\).
The final latent representation is then mapped to the output space by a
pointwise projection operator \(Q\), giving the FNO prediction
\(\widehat U=\mathcal G_\theta(a)=Q(v_L)\).

\subsection{Residual Time-Step Update}
\label{subsec:residual_time_update}

In the autoregressive rollout, we train the FNO model to predict an increment rather than the full next state. Given the history window
\(a:=\mathbf U^n\), the provisional prediction can be written as
\[
    \widetilde U^{n+1} = R_\theta(\mathbf U^n) = U^n+\mathcal G_\theta(\mathbf U^n).
\]
Thus, the neural operator learns the local change in the PDE state over one
time step. This residual time-step formulation is useful for autoregressive
prediction because consecutive states are typically close for sufficiently
small \(\Delta t\). The provisional state \(\widetilde U^{n+1}\) is then passed to the invariant
projection module.

\subsection{Invariant Projection Block}
\label{subsec:invariant_projection_block}

The residual FNO update produces a provisional prediction
\(\widetilde U^{n+1}=U^n+\mathcal G_\theta(\mathbf U^n)\). Since this prediction is not guaranteed to preserve the discrete
invariants of the underlying Hamiltonian system, we apply a post-processing
projection step.

Let \(\mathbf C_h(U)\) denote the
vector of discrete invariants to be stabilized. At each time step, we define the target invariant value by
\(\mathbf c^\star=\mathbf C_h(U^n)\). The projected EP-FNO update is then
defined by 
\[
    U^{n+1}=\Pi_{\mathbf c^\star}(\widetilde U^{n+1}) 
\]
where \(\Pi_{\mathbf c^\star}\) is a projection map chosen so that \(\mathbf C_h(U^{n+1})\approx \mathbf c^\star\). The projection serves as a structure-preserving correction applied after the prediction made by the FNO algorithm. 

For a general Hamiltonian PDE, the low-dimensional correction map is written in terms of correction directions as
\begin{align}\label{correct}
     \Pi_{\boldsymbol\lambda}(V)
    =
    V
    +
    \sum_{j=1}^{q}
    \lambda_j D_j(V),
\end{align}
where \(D_j(V)\), \(j=1,\ldots,q\), are chosen correction directions. They tell the projection which direction to move the raw prediction V in order to reduce the invariant defect and
\(\boldsymbol\lambda=(\lambda_1,\ldots,\lambda_q)^\top\) is the vector of
correction parameters. 
\subsection{Implementation of the EP-FNO Algorithm}

In this section, we describe the implementation of EP-FNO. At each
time level, the residual FNO first produces a provisional next state from the
current history window. The prescribed discrete invariants are then computed
from the last accepted state, and a low-dimensional damped projection is applied
to the provisional prediction. This projection is recomputed for each sample and
each rollout step, so the FNO backbone remains unchanged while the correction
reduces drift in the prescribed invariants during long autoregressive prediction.

\begin{algorithm}[htpb]
\caption{EP-FNO with damped invariant projection algorithm}
\label{alg:epfno_projection}
\begin{algorithmic}[1]

\State \textbf{Input.} History window $ \mathbf U^n = \bigl( U^{n-T_{\rm in}+1},\ldots,U^n\bigr),$ residual FNO \(\mathcal G_\theta\), discrete invariant map $ \mathbf C_h(U)$ projection family \(\Pi_{\boldsymbol\lambda}\), identity correction
\(\boldsymbol\lambda_0\), damping parameter \(0<\eta\le 1\), and Newton
tolerance \(\mathrm{tol}\).

\State \textbf{Residual FNO prediction.}
Compute the provisional next state
\[
    \widetilde U^{n+1}
    =
    U^n+\mathcal G_\theta(\mathbf U^n).
\]

\State \textbf{Target invariant value.}
Compute the invariant vector from the last accepted state,
\[
    \mathbf c^\star
    =
    \mathbf C_h(U^n).
\]

\State \textbf{Invariant defect.}
For a provisional state \(V\), define (using \eqref{correct})
\[
    \mathbf F(\boldsymbol\lambda;V)
    =
    \mathbf C_h\!\left(
    \Pi_{\boldsymbol\lambda}(V)
    \right)
    -
    \mathbf c^\star .
\]

\State \textbf{Low-dimensional correction.}
Starting from \(\boldsymbol\lambda^{(0)}=\boldsymbol\lambda_0\), apply Newton
iterations to solve
\[
    \mathbf F(\boldsymbol\lambda_\ast;\widetilde U^{n+1})
    =
    \mathbf 0.
\]

\State \textbf{Damped correction.}
Set
\[
    \overline{\boldsymbol\lambda}
    =
    \boldsymbol\lambda_0
    +
    \eta
    \left(
    \boldsymbol\lambda_\ast-\boldsymbol\lambda_0
    \right).
\]

\State \textbf{Projected EP-FNO state.}
Define
\[
    U^{n+1}
    =
    \Pi_{\mathbf c^\star}(\widetilde U^{n+1})
    =
    \Pi_{\overline{\boldsymbol\lambda}}
    (\widetilde U^{n+1}).
\]

\State \textbf{Output.} Projected state \(U^{n+1}\).

\end{algorithmic}
\end{algorithm}

\

\begin{theorem}[{\cite[Theorem 5]{kovachki2021universal}}]\label{thm:universalapproximation}
Let \(s,s' \ge 0\). Let
$
\mathcal G : H^s(\mathbb T^d;\mathbb R^{d_a})
\to
H^{s'}(\mathbb T^d;\mathbb R^{d_u})
$
be a continuous operator. Let
\(K \subset H^s(\mathbb T^d;\mathbb R^{d_a})\)
be a compact subset. Then for any \(\epsilon > 0\), there exists a FNO
$
\mathcal N : H^s(\mathbb T^d;\mathbb R^{d_a})
\to
H^{s'}(\mathbb T^d;\mathbb R^{d_u}),
$
of the form (6), continuous as an operator \(H^s \to H^{s'}\), such that
\[
\sup_{a\in K}
\left\|
\mathcal G(a)-\mathcal N(a)
\right\|_{H^{s'}}
\le \epsilon .
\]
\end{theorem}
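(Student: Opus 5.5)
The plan follows the strategy of Kovachki, Lanthaler and Mishra for the universal approximation of FNOs: reduce the infinite-dimensional approximation problem to a finite-dimensional one via Fourier truncation, then realize each finite-dimensional piece with Fourier layers.

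\textbf{Step 1: reduce to band-limited inputs and outputs.} Let \(P_N\) denote the orthogonal projection onto Fourier modes \(|k|_\infty\le N\). Since \(K\) is compact in \(H^s\), we have \(\sup_{a\in K}\|a-P_Na\|_{H^s}\to 0\) as \(N\to\infty\). The sets \(K\cup P_N K\) lie in a single compact set \(\widetilde K\), because \(\{P_N\}\) is uniformly bounded and converges strongly on compacts. Hence \(\mathcal G\) is uniformly continuous on \(\widetilde K\), and for \(N\) large,
\[
\sup_{a\in K}\|\mathcal G(a)-\mathcal G(P_Na)\|_{H^{s'}}\le \epsilon/3 .
\]
Similarly, \(\mathcal G(\widetilde K)\) is compact in \(H^{s'}\). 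So for \(N\) large,
\[
\sup_{a\in K}\|\mathcal G(P_Na)-P_N\mathcal G(P_Na)\|_{H^{s'}}\le\epsilon/3 .
\]
It therefore suffices to approximate the map
\[
\widehat{\mathcal G}_N := P_N\circ\mathcal G\circ P_N .
\]
In Fourier coefficients, \(\widehat{\mathcal G}_N\) is a continuous map between finite-dimensional spaces, from \(\mathbb C^{(2N+1)^d d_a}\) to \(\mathbb C^{(2N+1)^d d_u}\), restricted to a compact set.

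\textbf{Step 2: build the FNO from three stages.}
\begin{itemize}
\item[(a)] \emph{Encoding.} Use a lifting \(P\) followed by Fourier layers whose kernels \(\mathcal F(\kappa)\) are supported on \(|k|\le N\). Their purpose is to write the Fourier coefficients of \(P_Na\) into channels as constant functions. Multiplying \(a\) by \(\cos(k\cdot x)\) and \(\sin(k\cdot x)\) requires nonlinearity. The plan is to first generate these trigonometric functions as channels from a constant channel, through the bias and Fourier multipliers. Products are then approximated pointwise by shallow networks, which can be assembled from compositions of \(W\)-layers and \(\sigma\). Finally, the zero mode is extracted with a Fourier multiplier.
\item[(b)] \emph{Finite-dimensional map.} Apply the classical universal approximation theorem for MLPs to the continuous finite-dimensional map on the compact coefficient set. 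This is realized by pointwise layers with \(\mathcal C=0\), acting on constant channels.
\item[(c)] \emph{Decoding.} Reconstruct \(\sum_{|k|\le N}\hat u_k e^{ik\cdot x}\) by again approximating products of the coefficient channels with the trigonometric channels, and then applying \(Q\).
\end{itemize}

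\textbf{Step 3: control the errors.} Each approximation is carried out uniformly on a compact set of bounded functions. At this stage everything is band-limited or smooth, so \(L^\infty\) closeness of the pointwise approximations upgrades to \(H^{s'}\) closeness. One way to do this is to apply a final projection \(P_N\) inside the last Fourier layer; on band-limited functions, the \(H^{s'}\) and \(L^2\) norms are equivalent with a constant depending on \(N\). The errors from (a)--(c) are then budgeted so that their total is at most \(\epsilon/3\). Continuity of \(\mathcal N\) as an operator \(H^s\to H^{s'}\) follows because each layer is continuous: it uses Lipschitz activations, and \(\mathcal N\) depends on \(a\) only through finitely many Fourier modes.

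\textbf{Main obstacle.} I expect the hardest part to be Step 2(a)/(c): approximating the multiplications by trigonometric functions uniformly, with only pointwise nonlinearities and convolutions available, while keeping the resulting error controlled in \(H^{s'}\) rather than merely in \(L^\infty\). The first approach I would try is to restrict all products to bounded ranges, which is guaranteed by compactness, so that the product approximation is uniform. After that, I would apply the mode truncation \(P_N\) so that the \(H^{s'}\) error can be compared with the \(L^2\) error.
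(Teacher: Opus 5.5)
Your overall structure matches the argument behind the cited result of Kovachki--Lanthaler--Mishra (the paper gives no proof of its own, only the citation). That argument reduces to $P_N\circ\mathcal G\circ P_N$ by compactness, encodes the Fourier coefficients into constant channels, applies an MLP, and decodes. Your Step~1 is correct. The genuine gap is in Step~2(a) and (c): the trigonometric channels cannot be produced ``from a constant channel through the bias and Fourier multipliers.'' Since $\mathcal F(c)$ is supported at $k=0$, any multiplier gives $\mathcal F^{-1}(\mathcal F(\kappa)\cdot\mathcal F(c))=\mathcal F(\kappa)(0)\,c$, which is again a constant. Pointwise affine maps and $\sigma$ also send constants to constants. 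More structurally, every layer built from $W$, a constant bias, a convolution $\mathcal C$ and a pointwise $\sigma$ commutes with translations $\tau_h v=v(\cdot-h)$, so the whole network is translation-equivariant. Consequently no constant output channel can equal $\hat a_k$ for $k\neq 0$, because $\hat a_k$ changes by the factor $e^{-ik\cdot h}$ under $\tau_h$. In fact the theorem itself is false for such networks. Take $K=\{0\}$ and $\mathcal G(a)\equiv\cos(x_1)$. Then $\mathcal N(0)$ must be constant, so $\|\mathcal G(0)-\mathcal N(0)\|_{H^{s'}}\ge\|\cos(x_1)\|_{L^2}>0$.

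The missing ingredient is positional information, which the FNO class in the cited theorem contains. There the lifting may depend on $x$, as in $v_0(x)=R(a(x),x)$, and/or the layers carry $x$-dependent bias functions $b_\ell(x)$. With this, $\cos(k\cdot x)$ and $\sin(k\cdot x)$ can be inserted directly as trigonometric-polynomial biases, or approximated by pointwise networks in $x$. Your encoding and decoding then go through as you describe. You must invoke this explicitly, since the paper's Fourier layer $\sigma(Wv+\mathcal C(v))$ has neither a bias nor any $x$-dependence.

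There are also two smaller points in Step~3. First, placing $P_N$ inside the last Fourier layer does not make the output band-limited, because $\sigma$ and $Q$ act afterwards. The last layer must reproduce its band-limited pre-activation: exactly for ReLU via $\sigma(z)-\sigma(-z)=z$, or approximately for a smooth $\sigma$ with control of derivatives. Second, the bounded ranges needed for the product approximations come from $P_NK$ being a bounded set of trigonometric polynomials (taking $W=0$ in the first layer). They do not come from compactness of $K$ in $H^s$, which gives no $L^\infty$ bound when $s\le d/2$.
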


\begin{lemma}[\textbf{Residual FNO approximation}]
\label{lem:residual_fno_approximation}
Let \(s,s'\ge 0\), and let \(\mathbf U^n\) denote the history window whose
most recent state is \(U^n\). Let
$
\Phi_h:H^s(\mathbb T^d;\mathbb R^{d_a})
\to H^{s'}(\mathbb T^d;\mathbb R^{d_u})
$
be the reference one-step flow map on a compact set
\(K\subset H^s(\mathbb T^d;\mathbb R^{d_a})\), so that
$
    \Phi_h(\mathbf U^n)=U^{n+1}.
$
Define the residual operator by
$
    G(\mathbf U^n):=\Phi_h(\mathbf U^n)-U^n,
$
where \(U^n\) is the most recent state in the history window. Assume that
\(G\) is well-defined and continuous as a map
$
    G:K\to H^{s'}(\mathbb T^d;\mathbb R^{d_u}).
$
Then, for every \(\delta>0\), there exists an FNO
$
\mathcal G_\theta:H^s(\mathbb T^d;\mathbb R^{d_a})
\to H^{s'}(\mathbb T^d;\mathbb R^{d_u})
$
such that the residual update
$
    R_\theta(\mathbf U^n)=U^n+\mathcal G_\theta(\mathbf U^n)
$
satisfies
\[
    \sup_{\mathbf U^n\in K}
    \|R_\theta(\mathbf U^n)-\Phi_h(\mathbf U^n)\|_{H^{s'}}
    \le \delta .
\]
\end{lemma}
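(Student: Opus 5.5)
The plan is to apply Theorem~\ref{thm:universalapproximation} directly to the residual operator \(G\), and then observe that the residual update reproduces the flow map up to exactly the approximation error of \(G\). The key identity is algebraic. For any \(\mathbf U^n\in K\),
\[
R_\theta(\mathbf U^n)-\Phi_h(\mathbf U^n)
=
U^n+\mathcal G_\theta(\mathbf U^n)-\Phi_h(\mathbf U^n)
=
\mathcal G_\theta(\mathbf U^n)-G(\mathbf U^n).
\]
Here we use the definition \(G(\mathbf U^n)=\Phi_h(\mathbf U^n)-U^n\). The term \(U^n\) cancels exactly, so we never need to control it in \(H^{s'}\). It suffices to find an FNO with \(\sup_{K}\|\mathcal G_\theta-G\|_{H^{s'}}\le\delta\).

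The main technical point is that Theorem~\ref{thm:universalapproximation} is stated for a continuous operator defined on all of \(H^s(\mathbb T^d;\mathbb R^{d_a})\). The lemma, however, only gives continuity of \(G\) on the compact set \(K\). I would handle this in two steps.

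\textbf{Extension.} Since \(K\) is a closed subset of the metric space \(H^s\), and \(H^{s'}\) is a locally convex (indeed Hilbert) space, the Dugundji extension theorem applies. It gives a continuous extension \(\overline G:H^s(\mathbb T^d;\mathbb R^{d_a})\to H^{s'}(\mathbb T^d;\mathbb R^{d_u})\) with \(\overline G|_K=G\).

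\textbf{Approximation.} Applying Theorem~\ref{thm:universalapproximation} to \(\overline G\) with compact set \(K\) and tolerance \(\epsilon=\delta\) yields an FNO \(\mathcal N\), continuous \(H^s\to H^{s'}\), with \(\sup_{a\in K}\|\overline G(a)-\mathcal N(a)\|_{H^{s'}}\le\delta\). Set \(\mathcal G_\theta:=\mathcal N\).

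Combining these with the identity above gives the claimed bound on \(\sup_{K}\|R_\theta-\Phi_h\|_{H^{s'}}\).

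A secondary point I would address is that \(R_\theta\) must be well defined. This means the map \(\mathbf U^n\mapsto U^n\), extracting the last state from the history window, must take values in \(H^{s'}\) with matching output dimension \(d_u\). I would state this as implicit in the hypothesis that \(G\) is well defined: since \(\Phi_h(\mathbf U^n)\in H^{s'}\) and \(G(\mathbf U^n)\in H^{s'}\), the difference \(U^n=\Phi_h(\mathbf U^n)-G(\mathbf U^n)\) lies in \(H^{s'}\) for \(\mathbf U^n\in K\). I would also note that \(d_a=T_{\rm in}d_u\), so that the input is the stacked history. With these remarks the argument is complete. The only nontrivial ingredient beyond the cited theorem is the extension step, and I expect that step to be the one needing care.
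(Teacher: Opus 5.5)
Your proof is correct and follows the paper's argument. The paper also applies Theorem~\ref{thm:universalapproximation} to the residual operator $G$ and concludes from the cancellation $R_\theta(\mathbf U^n)-\Phi_h(\mathbf U^n)=\mathcal G_\theta(\mathbf U^n)-G(\mathbf U^n)$ on $K$. Your Dugundji extension step is a real gain in rigor: the cited theorem is stated for operators continuous on all of $H^s(\mathbb T^d;\mathbb R^{d_a})$, whereas $G$ is only assumed continuous on $K$, and the paper skips this point. Your observation that $U^n=\Phi_h(\mathbf U^n)-G(\mathbf U^n)\in H^{s'}$ on $K$ likewise settles the well-definedness of $R_\theta$ there, which the paper takes for granted.
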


\begin{proof}
By assumption, \(G:K\to H^{s'}\) is continuous. Therefore, by
Theorem~\ref{thm:universalapproximation}, for every \(\delta>0\), there exists
an FNO \(\mathcal G_\theta\) such that
\[
    \sup_{\mathbf U^n\in K}
    \|G(\mathbf U^n)-\mathcal G_\theta(\mathbf U^n)\|_{H^{s'}}
    \le \delta .
\]
We obtain
\[
\begin{aligned}
    \|R_\theta(\mathbf U^n)-\Phi_h(\mathbf U^n)\|_{H^{s'}}
    &=
    \|U^n+\mathcal G_\theta(\mathbf U^n)
    -[U^n+G(\mathbf U^n)]\|_{H^{s'}}  =
    \|\mathcal G_\theta(\mathbf U^n)-G(\mathbf U^n)\|_{H^{s'}} .
\end{aligned}
\]
Therefore,
\[
\begin{aligned}
    \sup_{\mathbf U^n\in K}
    \|R_\theta(\mathbf U^n)-\Phi_h(\mathbf U^n)\|_{H^{s'}}
    &=
    \sup_{\mathbf U^n\in K}
    \|\mathcal G_\theta(\mathbf U^n)-G(\mathbf U^n)\|_{H^{s'}} \le \delta .
\end{aligned}
\]
This proves the result.
\end{proof}

\begin{lemma}[\textbf{Lipschitz continuity of the damped projection}]
\label{lem:projection_lipschitz}
Let \(\mathcal K\subset H^{s'}(\mathbb T^d;\mathbb R^{d_u})\) be compact, and define the damped projection
\[
    \Pi_{\mathbf c^\star}^{\eta}(V)
    =
    V+\eta\sum_{j=1}^{q}\lambda_j(V)D_j(V),
    \qquad 0<\eta\le 1 .
\]
Assume that, for each \(j=1,\ldots,q\), the coefficient \(\lambda_j:\mathcal K\to\mathbb R\) is well-defined, uniformly bounded, and Lipschitz continuous on \(\mathcal K\), and that \(D_j:\mathcal K\to H^{s'}(\mathbb T^d;\mathbb R^{d_u})\) is uniformly bounded and Lipschitz continuous on \(\mathcal K\), with lipschitz constants \(M_\lambda,L_\lambda,M_D,L_D>0\) such that, for all \(V,W\in\mathcal K\) respectively.
Then \(\Pi_{\mathbf c^\star}^{\eta}\) is Lipschitz continuous on \(\mathcal K\), with lipschitz constants $ L_\Pi^\eta$
where
$
    L_\Pi^\eta
    =
    1+\eta q(M_\lambda L_D+L_\lambda M_D).
$
\end{lemma}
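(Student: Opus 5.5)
The plan is a direct triangle-inequality argument that uses the product rule for Lipschitz functions. I read the hypotheses as follows. For all \(V,W\in\mathcal K\) and every \(j\),
\[
|\lambda_j(V)|\le M_\lambda,\qquad |\lambda_j(V)-\lambda_j(W)|\le L_\lambda\|V-W\|_{H^{s'}},
\]
\[
\|D_j(V)\|_{H^{s'}}\le M_D,\qquad \|D_j(V)-D_j(W)\|_{H^{s'}}\le L_D\|V-W\|_{H^{s'}}.
\]
Compactness of \(\mathcal K\) is not really needed beyond guaranteeing that these uniform constants exist.

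\textbf{Step 1: split the difference.} Fix \(V,W\in\mathcal K\) and write
\[
\Pi_{\mathbf c^\star}^{\eta}(V)-\Pi_{\mathbf c^\star}^{\eta}(W)
=(V-W)+\eta\sum_{j=1}^{q}\bigl(\lambda_j(V)D_j(V)-\lambda_j(W)D_j(W)\bigr).
\]
Take the \(H^{s'}\) norm and apply the triangle inequality. The first term contributes \(\|V-W\|_{H^{s'}}\), which gives the leading \(1\) in \(L_\Pi^\eta\). The sum contributes at most \(\eta\) times the sum of the norms of the \(q\) summands.

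\textbf{Step 2: bound each summand.} Add and subtract \(\lambda_j(V)D_j(W)\):
\[
\lambda_j(V)D_j(V)-\lambda_j(W)D_j(W)=\lambda_j(V)\bigl(D_j(V)-D_j(W)\bigr)+\bigl(\lambda_j(V)-\lambda_j(W)\bigr)D_j(W).
\]
The first piece is bounded in norm by \(M_\lambda L_D\|V-W\|_{H^{s'}}\). The second is bounded by \(L_\lambda M_D\|V-W\|_{H^{s'}}\), using that \(\lambda_j\) is scalar-valued, so \(\|cD\|=|c|\,\|D\|\).

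\textbf{Step 3: sum up.} Summing over the \(q\) indices and multiplying by \(\eta\) gives
\[
\|\Pi_{\mathbf c^\star}^{\eta}(V)-\Pi_{\mathbf c^\star}^{\eta}(W)\|_{H^{s'}}\le\bigl(1+\eta q(M_\lambda L_D+L_\lambda M_D)\bigr)\|V-W\|_{H^{s'}},
\]
which is exactly \(L_\Pi^\eta\).

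The only point needing care is the interpretation of the hypotheses: that the constants are uniform in \(j\) (otherwise one takes maxima over \(j\)), and that \(D_j(W)\) in the cross term is evaluated at a point of \(\mathcal K\) so that the bound \(M_D\) applies. Beyond that, no earlier result from the paper is needed; the argument is self-contained.
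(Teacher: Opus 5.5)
Your proof is correct and follows essentially the same route as the paper: split off $V-W$, add and subtract $\lambda_j(V)D_j(W)$ in each summand, bound the two pieces by $M_\lambda L_D\|V-W\|_{H^{s'}}$ and $L_\lambda M_D\|V-W\|_{H^{s'}}$, and sum over $j$. Your explicit reading of the somewhat garbled hypotheses as bounds holding uniformly in $j$ is exactly what the paper's argument implicitly uses.
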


\begin{proof}
For \(V,W\in\mathcal K\), we have
\[
\begin{aligned}
    \Pi_{\mathbf c^\star}^{\eta}(V)-\Pi_{\mathbf c^\star}^{\eta}(W)
    &=
    V-W
    +
    \eta\sum_{j=1}^{q}
    \left[
    \lambda_j(V)D_j(V)-\lambda_j(W)D_j(W)
    \right].
\end{aligned}
\]
For each \(j\), we observe that 
\[
    \lambda_j(V)D_j(V)-\lambda_j(W)D_j(W)
    =
    \lambda_j(V)\bigl(D_j(V)-D_j(W)\bigr)
    +
    \bigl(\lambda_j(V)-\lambda_j(W)\bigr)D_j(W).
\]
Therefore,
\[
\begin{aligned}
    \|\Pi_{\mathbf c^\star}^{\eta}(V)-\Pi_{\mathbf c^\star}^{\eta}(W)\|_{H^{s'}}
    &\le
    \|V-W\|_{H^{s'}}
    +
    \eta\sum_{j=1}^{q}
    \Bigl(
    |\lambda_j(V)|\|D_j(V)-D_j(W)\|_{H^{s'}} \\
    &\qquad\qquad\qquad
    +
    |\lambda_j(V)-\lambda_j(W)|\|D_j(W)\|_{H^{s'}}
    \Bigr) \\
    &\le
    \left[
    1+\eta q(M_\lambda L_D+L_\lambda M_D)
    \right]
    \|V-W\|_{H^{s'}} .
\end{aligned}
\]
Thus \(\Pi_{\mathbf c^\star}^{\eta}\) is Lipschitz continuous on \(\mathcal K\).
\end{proof}
\begin{lemma}[\textbf{Lipschitz continuity of the EP-FNO one-step map}]
\label{lem:epfno_map_lipschitz}
Let \(K\subset H^s(\mathbb T^d;\mathbb R^{d_a})\) be compact. Assume that \(\mathcal G_\theta\) is Lipschitz.
Then the residual update \(R_\theta\) is Lipschitz on \(K\), with lipschitz constants  $1+L_\mathcal G$.
If, in addition, the projection \(\Pi_{\mathbf c^\star}^{\eta}\) satisfies
Lemma~\ref{lem:projection_lipschitz}, then the EP-FNO one-step map
$
    \Psi_h^\eta(\mathbf U)
    =
    \Pi_{\mathbf c^\star}^{\eta}
    \left(
    R_\theta(\mathbf U)
    \right)
$
is Lipschitz on \(K\) with lipschitz constant  $L_\Psi^\eta$ 
where
$
    L_\Psi^\eta
    =
    L_\Pi^\eta(1+L_{\mathcal G}).
$
\end{lemma}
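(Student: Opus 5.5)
The plan is to split the argument into two stages: first the residual update \(R_\theta\), then the composition \(\Psi_h^\eta=\Pi_{\mathbf c^\star}^{\eta}\circ R_\theta\). The only inputs are the triangle inequality, the assumed Lipschitz constant \(L_{\mathcal G}\) of \(\mathcal G_\theta\), and Lemma~\ref{lem:projection_lipschitz}.

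\textbf{Step 1 (residual update).} For two history windows \(\mathbf U,\mathbf W\in K\) with most recent states \(U,W\), I would write
\[
    R_\theta(\mathbf U)-R_\theta(\mathbf W)=(U-W)+\bigl(\mathcal G_\theta(\mathbf U)-\mathcal G_\theta(\mathbf W)\bigr).
\]
Taking the \(H^{s'}\) norm and applying the triangle inequality gives two terms. The second term is at most \(L_{\mathcal G}\|\mathbf U-\mathbf W\|_{H^s}\). For the first term I need \(\|U-W\|_{H^{s'}}\le\|\mathbf U-\mathbf W\|_{H^s}\). This holds if the norm on the history space controls each component and \(s'\le s\), so that \(H^s\hookrightarrow H^{s'}\) with constant one. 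I will state this as the standing convention implicit in the statement, since it is needed for the constant to be exactly \(1+L_{\mathcal G}\). Combining the two bounds yields Lipschitz constant \(1+L_{\mathcal G}\) for \(R_\theta\) on \(K\).

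\textbf{Step 2 (composition).} I would set \(\mathcal K:=R_\theta(K)\). This set is compact because \(R_\theta\) is continuous, being Lipschitz by Step 1, and \(K\) is compact. It is therefore a legitimate domain for Lemma~\ref{lem:projection_lipschitz}, and I assume the hypotheses of that lemma (bounded, Lipschitz \(\lambda_j\) and \(D_j\)) hold on \(\mathcal K\). Then
\[
    \|\Psi_h^\eta(\mathbf U)-\Psi_h^\eta(\mathbf W)\|_{H^{s'}}\le L_\Pi^\eta\|R_\theta(\mathbf U)-R_\theta(\mathbf W)\|_{H^{s'}}\le L_\Pi^\eta(1+L_{\mathcal G})\|\mathbf U-\mathbf W\|_{H^s},
\]
which is the claimed constant \(L_\Psi^\eta\).

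\textbf{Main obstacle.} The real difficulty is not the algebra but the bookkeeping of spaces. The inputs are history windows in \(H^s\) of \(\mathbb R^{d_a}\)-valued functions, while the outputs and \(U^n\) lie in \(H^{s'}\) of \(\mathbb R^{d_u}\)-valued functions. Step 1 therefore requires the extraction of the last state \(\mathbf U\mapsto U\) to be a contraction from the input space into \(H^{s'}\). I would make this explicit by assuming \(s'\le s\) and a product norm on the window. Otherwise, the unit coefficient would be replaced by the norm of the extraction/embedding map, and that constant would also multiply into \(L_\Psi^\eta\).

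A second point to verify is that Lemma~\ref{lem:projection_lipschitz} is applied on the image set \(\mathcal K=R_\theta(K)\), not on \(K\) itself. Continuity of \(R_\theta\) is exactly what guarantees that \(\mathcal K\) is compact.
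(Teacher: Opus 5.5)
Your proposal is correct and follows essentially the same route as the paper: an $\ell^2$ product norm on the history window so that the most recent state is controlled by the window norm, then the triangle inequality with the Lipschitz bound for $\mathcal G_\theta$ to get the constant $1+L_{\mathcal G}$, and finally Lemma~\ref{lem:projection_lipschitz} for the composition. Your explicit assumption $s'\le s$, and your choice to apply the projection lemma on the compact image $R_\theta(K)$, make precise two steps the paper uses only implicitly; in particular, the paper bounds $\|U-W\|_{H^{s'}}$ by $\|\mathbf U-\mathbf W\|_{H^s}$ after proving only the $H^s$ component bound.
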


\begin{proof}
For \(\mathbf U,\mathbf W\in K\), we have

First, define the history-window norm by
\[
\begin{aligned}
    \|\mathbf U-\mathbf W\|_{H^s}
    &=
    \left(
    \sum_{\ell=0}^{T_{\rm in}-1}
    \|U^{n-\ell}-W^{n-\ell}\|_{H^s}^2
    \right)^{1/2}.
\end{aligned}
\]
Since the most recent state corresponds to one term in the sum, we have
\[
\begin{aligned}
    \|U^n-W^n\|_{H^s}^2
    &\le
    \sum_{\ell=0}^{T_{\rm in}-1}
    \|U^{n-\ell}-W^{n-\ell}\|_{H^s}^2, \implies \|U^n-W^n\|_{H^s}
    &\le
    \|\mathbf U-\mathbf W\|_{H^s}.
\end{aligned}
\]
Therefore, using Lemma~\ref{lem:projection_lipschitz}, we obtain
\[
\begin{aligned}
    \|\Psi_h^\eta(\mathbf U)-\Psi_h^\eta(\mathbf W)\|_{H^{s'}}
    &=
    \|\Pi_{\mathbf c^\star}^{\eta}(R_\theta(\mathbf U))
    -
    \Pi_{\mathbf c^\star}^{\eta}(R_\theta(\mathbf W))\|_{H^{s'}} \\
    &\le
    L_\Pi^\eta
    \|R_\theta(\mathbf U)-R_\theta(\mathbf W)\|_{H^{s'}} \\
    &=
    L_\Pi^\eta
    \|U+\mathcal G_\theta(\mathbf U)
    -
    [W+\mathcal G_\theta(\mathbf W)]\|_{H^{s'}} \\
    &\le
    L_\Pi^\eta
    \left(
    \|U-W\|_{H^{s'}}
    +
    \|\mathcal G_\theta(\mathbf U)-\mathcal G_\theta(\mathbf W)\|_{H^{s'}}
    \right) \\
    &\le
    L_\Pi^\eta
    \left(
    \|\mathbf U-\mathbf W\|_{H^s}
    +
    L_{\mathcal G}\|\mathbf U-\mathbf W\|_{H^s}
    \right) \\
    &=
    L_\Pi^\eta(1+L_{\mathcal G})
    \|\mathbf U-\mathbf W\|_{H^s}.
\end{aligned}
\]
as desired. 
\end{proof}

\begin{theorem}[\textbf{Universal approximation by EP-FNO}]
\label{thm:epfno_universal_approximation}
Let \(s,s'\ge 0\), and let \(\mathbf U^n\) denote the history window whose
most recent state is \(U^n\). Let
$
\Phi_h:H^s(\mathbb T^d;\mathbb R^{d_a})
\to H^{s'}(\mathbb T^d;\mathbb R^{d_u})
$
be the reference one-step flow map on a compact set
\(K\subset H^s(\mathbb T^d;\mathbb R^{d_a})\), so that
$
    \Phi_h(\mathbf U^n)=U^{n+1}.
$
Assume that \(\Pi_{\mathbf c^\star}^{\eta}\) satisfies the hypotheses of
Lemma~\ref{lem:projection_lipschitz} on a compact set
\(\mathcal K\subset H^{s'}(\mathbb T^d;\mathbb R^{d_u})\) containing the
relevant provisional states \(R_\theta(\mathbf U^n)\) and reference states
\(\Phi_h(\mathbf U^n)\). Assume also that the projection fixes the reference flow,
$
    \Pi_{\mathbf c^\star}^{\eta}(\Phi_h(\mathbf U^n))
    =
    \Phi_h(\mathbf U^n), \text{for } \mathbf U^n\in K.
$
Then, for every \(\varepsilon>0\), there exists an FNO
$
\mathcal G_\theta:H^s(\mathbb T^d;\mathbb R^{d_a})
\to H^{s'}(\mathbb T^d;\mathbb R^{d_u})
$
such that the EP-FNO map $
    \Psi_h^\eta(\mathbf U^n)
    =
    \Pi_{\mathbf c^\star}^{\eta}
    \left(
    U^n+\mathcal G_\theta(\mathbf U^n)
    \right)
$
satisfies
\[
    \sup_{\mathbf U^n\in K}
    \|\Psi_h^\eta(\mathbf U^n)-\Phi_h(\mathbf U^n)\|_{H^{s'}}
    \le \varepsilon .
\]
\end{theorem}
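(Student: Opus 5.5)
The plan is to combine the residual approximation result, Lemma~\ref{lem:residual_fno_approximation}, with the Lipschitz bound for the damped projection, Lemma~\ref{lem:projection_lipschitz}. The fixed-point assumption $\Pi_{\mathbf c^\star}^{\eta}(\Phi_h(\mathbf U^n))=\Phi_h(\mathbf U^n)$ is what converts a Lipschitz estimate into an approximation estimate.

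Fix $\varepsilon>0$ and set
\[
\delta=\frac{\varepsilon}{L_\Pi^\eta},\qquad L_\Pi^\eta=1+\eta q(M_\lambda L_D+L_\lambda M_D).
\]
Since $L_\Pi^\eta\ge 1$, this $\delta$ is positive and finite. The residual $G(\mathbf U^n)=\Phi_h(\mathbf U^n)-U^n$ is continuous on $K$, which is implicit in the hypotheses in the same way as in Lemma~\ref{lem:residual_fno_approximation}. That lemma then supplies an FNO $\mathcal G_\theta$ with
\[
\sup_{\mathbf U^n\in K}\|R_\theta(\mathbf U^n)-\Phi_h(\mathbf U^n)\|_{H^{s'}}\le\delta .
\]
For each $\mathbf U^n\in K$, use the fixed-point property to write
\[
\Psi_h^\eta(\mathbf U^n)-\Phi_h(\mathbf U^n)=\Pi_{\mathbf c^\star}^{\eta}(R_\theta(\mathbf U^n))-\Pi_{\mathbf c^\star}^{\eta}(\Phi_h(\mathbf U^n)).
\]
Both arguments lie in $\mathcal K$ by hypothesis, so Lemma~\ref{lem:projection_lipschitz} bounds the norm of this difference by $L_\Pi^\eta\,\delta=\varepsilon$. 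Taking the supremum over $K$ completes the argument.

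The main obstacle is a circularity. The hypothesis that $\mathcal K$ contains the provisional states $R_\theta(\mathbf U^n)$ refers to the very $\theta$ being constructed. The constants $M_\lambda,L_\lambda,M_D,L_D$, and hence $\delta$, should not depend on $\theta$.

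To resolve this, I would fix $\mathcal K$ in advance as a compact neighbourhood of the compact image $\Phi_h(K)$. A concrete choice is the intersection of a closed $1$-neighbourhood with a compact set on which the projection hypotheses hold. I would then take
\[
\delta=\min\{1,\ \varepsilon/L_\Pi^\eta\}.
\]
With this choice every $R_\theta(\mathbf U^n)$ lies within distance $\delta\le 1$ of $\Phi_h(K)$. I would note explicitly that the theorem assumes the projection hypotheses hold on such a set containing these states, so the constants are fixed before $\theta$ is chosen.

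A second point to state is that $\mathbf c^\star=\mathbf C_h(U^n)$ depends on the input. The projection is therefore really a family indexed by $\mathbf U^n$. I would interpret the Lipschitz constants in Lemma~\ref{lem:projection_lipschitz} as uniform over this family, so that the single estimate above applies pointwise in $\mathbf U^n$.
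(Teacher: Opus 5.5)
Your proposal is correct and follows the paper's proof: Lemma~\ref{lem:residual_fno_approximation} with $\delta=\varepsilon/L_\Pi^\eta$, then the fixed-point identity and Lemma~\ref{lem:projection_lipschitz} give the bound $L_\Pi^\eta\delta=\varepsilon$, and your point that the constants on $\mathcal K$ must be fixed before $\theta$ is chosen is legitimate and is passed over in the paper. One caution on your repair: $H^{s'}$ is infinite-dimensional, so $\Phi_h(K)$ has no compact neighbourhood, and intersecting a closed $1$-neighbourhood with a compact set does not guarantee that $R_\theta(\mathbf U^n)$ lands in that intersection; you can fix this either by noting that the proof of Lemma~\ref{lem:projection_lipschitz} never uses compactness, so the bounds may be assumed on a closed bounded neighbourhood of $\Phi_h(K)$, or by reading the hypothesis as fixing $\mathcal K$ and its constants in advance, as you ultimately do.
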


\begin{proof}
By Lemma~\ref{lem:residual_fno_approximation}, for every \(\delta>0\), there
exists an FNO \(\mathcal G_\theta\) such that the residual update map
satisfies
\[
    \sup_{\mathbf U^n\in K}
    \|R_\theta(\mathbf U^n)-\Phi_h(\mathbf U^n)\|_{H^{s'}}
    \le \delta .
\]
Therefore, using Lemma~\ref{lem:projection_lipschitz},
\[
\begin{aligned}
    \|\Psi_h^\eta(\mathbf U^n)-\Phi_h(\mathbf U^n)\|_{H^{s'}}
    &=
    \|\Pi_{\mathbf c^\star}^{\eta}(R_\theta(\mathbf U^n))
    -
    \Pi_{\mathbf c^\star}^{\eta}(\Phi_h(\mathbf U^n))\|_{H^{s'}} \le
    L_\Pi^\eta
    \|R_\theta(\mathbf U^n)-\Phi_h(\mathbf U^n)\|_{H^{s'}} .
\end{aligned}
\]
Taking the supremum over \(\mathbf U^n\in K\), we obtain
\[
\begin{aligned}
    \sup_{\mathbf U^n\in K}
    \|\Psi_h^\eta(\mathbf U^n)-\Phi_h(\mathbf U^n)\|_{H^{s'}}
    &\le
    L_\Pi^\eta
    \sup_{\mathbf U^n\in K}
    \|R_\theta(\mathbf U^n)-\Phi_h(\mathbf U^n)\|_{H^{s'}} \le
    L_\Pi^\eta \delta .
\end{aligned}
\]
Choose $\delta=\dfrac{\varepsilon}{L_\Pi^\eta}.$ Then
\[
    \sup_{\mathbf U^n\in K}
    \|\Psi_h^\eta(\mathbf U^n)-\Phi_h(\mathbf U^n)\|_{H^{s'}}
    \le \varepsilon .
\]
Hence, the EP-FNO map universally approximates the reference one-step flow map
on \(K\).
\end{proof}

\begin{lemma}[\textbf{Taylor-consistent residual approximation}]
\label{lem:taylor_consistent_residual}
Let \(K\subset H^s(\mathbb T^d;\mathbb R^{d_a})\) be compact. Assume that the
reference one-step flow \(\Phi_h\) admits the Taylor expansion
\[
    \Phi_h(\mathbf U)
    =
    U+\sum_{r=1}^{p} h^r A_r(\mathbf U)
    +
    O(h^{p+1})
\]
uniformly for \(\mathbf U\in K\), where \(U\) is the most recent state in the
history window and each coefficient map
$ A_r:K\to H^{s'}(\mathbb T^d;\mathbb R^{d_u})$
is continuous. Define
$ G_{p,h}(\mathbf U)
    =
    \sum_{r=1}^{p} h^r A_r(\mathbf U).$
Then, for every \(h>0\) sufficiently small, there exists an FNO
\(\mathcal G_{\theta(h)}\) ~\ref{thm:universalapproximation} such that
\[
    \sup_{\mathbf U\in K}
    \|\mathcal G_{\theta(h)}(\mathbf U)-G_{p,h}(\mathbf U)\|_{H^{s'}}
    \le
    C h^{p+1}.
\]
Consequently, the residual update
satisfies
\[
    \sup_{\mathbf U\in K}
    \|R_{\theta(h)}(\mathbf U)-\Phi_h(\mathbf U)\|_{H^{s'}}
    =
    O(h^{p+1}).
\]
\end{lemma}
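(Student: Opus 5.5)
The plan is to apply Theorem~\ref{thm:universalapproximation} to the truncated residual map \(G_{p,h}\) for each fixed small \(h\), with a tolerance chosen to scale like \(h^{p+1}\). The second estimate then follows from the triangle inequality combined with the Taylor remainder.

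\textbf{Step 1 (continuity of the target).} For fixed \(h>0\), the map \(G_{p,h}=\sum_{r=1}^p h^r A_r\) is a finite linear combination of the continuous maps \(A_r:K\to H^{s'}\). It is therefore continuous on the compact set \(K\). To invoke Theorem~\ref{thm:universalapproximation}, which is stated for operators defined on all of \(H^s\), I would first extend \(G_{p,h}\) continuously from \(K\) to \(H^s\). The Dugundji extension theorem does this, since \(K\) is closed in the metric space \(H^s\) and the target is a normed space. Alternatively, one notes that the proof in \cite{kovachki2021universal} only uses continuity on \(K\).

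\textbf{Step 2 (choice of tolerance).} I apply Theorem~\ref{thm:universalapproximation} with \(\epsilon=h^{p+1}\). This yields an FNO \(\mathcal G_{\theta(h)}\) with
\[
\sup_{\mathbf U\in K}\|\mathcal G_{\theta(h)}(\mathbf U)-G_{p,h}(\mathbf U)\|_{H^{s'}}\le h^{p+1},
\]
which is the first claim with \(C=1\). The parameters \(\theta(h)\), and hence the network size, are allowed to depend on \(h\); the statement permits this through the notation \(\theta(h)\). This dependence is really the only subtle point, since no \(h\)-uniform bound on network complexity is claimed, or could be obtained from Theorem~\ref{thm:universalapproximation} alone.

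\textbf{Step 3 (residual update).} Write \(R_{\theta(h)}(\mathbf U)-\Phi_h(\mathbf U)=\mathcal G_{\theta(h)}(\mathbf U)-G_{p,h}(\mathbf U)-\rho_h(\mathbf U)\), where \(\rho_h:=\Phi_h-U-G_{p,h}\). By the uniform Taylor hypothesis, there are a constant \(C_T\) and an \(h_0>0\) such that \(\sup_K\|\rho_h\|_{H^{s'}}\le C_T h^{p+1}\) for \(0<h\le h_0\); this is where \emph{sufficiently small} enters. The triangle inequality then gives
\[
\sup_{\mathbf U\in K}\|R_{\theta(h)}(\mathbf U)-\Phi_h(\mathbf U)\|_{H^{s'}}\le (1+C_T)h^{p+1}=O(h^{p+1}),
\]
with a constant independent of \(h\).

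\textbf{Main obstacle.} I expect the main difficulty to be in Step 1. One must match the domain of Theorem~\ref{thm:universalapproximation} (operators on the whole space) with maps given only on \(K\), and make sure the \(O(h^{p+1})\) Taylor remainder is interpreted uniformly in the \(H^{s'}\) norm on \(K\). Once both are in place, the remainder of the argument is bookkeeping.
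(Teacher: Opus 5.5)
Your proposal is correct and matches the paper's proof essentially step for step. You apply Theorem~\ref{thm:universalapproximation} to the continuous map $G_{p,h}$ with tolerance $h^{p+1}$ (so $C=1$), then combine this with the uniform Taylor remainder via the triangle inequality to get the bound $(1+C_1)h^{p+1}$. Your Dugundji-extension remark is a small added piece of rigor that the paper omits, since the paper applies the theorem directly to a map defined only on $K$.
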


\begin{proof}
By the Taylor expansion assumption, there exists a constant \(C_1>0\) such
that
\[
    \sup_{\mathbf U\in K}
    \left\|
    \Phi_h(\mathbf U)
    -
    U
    -
    G_{p,h}(\mathbf U)
    \right\|_{H^{s'}}
    \le
    C_1h^{p+1}.
\]
Since each \(A_r\) is continuous on \(K\), the map \(G_{p,h}\) is continuous on
\(K\). Since \(G_{p,h}\) is continuous on the compact set \(K\), the universal
approximation theorem for FNOs implies that, for every tolerance
\(\delta>0\), there exists an FNO \(\mathcal G_\theta\) such that
\[
    \sup_{\mathbf U\in K}
    \|\mathcal G_\theta(\mathbf U)-G_{p,h}(\mathbf U)\|_{H^{s'}}
    \le
    \delta .
\]
For each fixed \(h>0\), choose
\[
    \delta=h^{p+1}.
\]
Then there exists an FNO, denoted by \(\mathcal G_{\theta(h)}\), such that
\[
    \sup_{\mathbf U\in K}
    \|\mathcal G_{\theta(h)}(\mathbf U)-G_{p,h}(\mathbf U)\|_{H^{s'}}
    \le
    h^{p+1}.
\]
Using the triangle inequality, we obtain
\[
\begin{aligned}
    \|R_{\theta(h)}(\mathbf U)-\Phi_h(\mathbf U)\|_{H^{s'}}
    &=
    \|U+\mathcal G_{\theta(h)}(\mathbf U)-\Phi_h(\mathbf U)\|_{H^{s'}} \\
    &\le
    \|\mathcal G_{\theta(h)}(\mathbf U)-G_{p,h}(\mathbf U)\|_{H^{s'}} \\
    &\quad+
    \|U+G_{p,h}(\mathbf U)-\Phi_h(\mathbf U)\|_{H^{s'}} \\
    &\le
    (C_1+1)h^{p+1}.
\end{aligned}
\]
Taking the supremum over \(\mathbf U\in K\) gives
\[
    \sup_{\mathbf U\in K}
    \|R_{\theta(h)}(\mathbf U)-\Phi_h(\mathbf U)\|_{H^{s'}}
    =
    O(h^{p+1}).
\]
\end{proof}

\begin{corollary}[\textbf{Taylor-consistent EP-FNO one-step error}]
\label{cor:taylor_consistent_epfno}
Assume that the hypotheses of Lemma~\ref{lem:taylor_consistent_residual} hold.
Assume also that the projection \(\Pi_{\mathbf c^\star}^{\eta}\) satisfies
Lemma~\ref{lem:projection_lipschitz} and fixes the reference flow:
$\Pi_{\mathbf c^\star}^{\eta}(\Phi_h(\mathbf U))
    =
    \Phi_h(\mathbf U),
 \mathbf U\in K.$
Then the EP-FNO one-step approximation error satisfies
\[
 \sup_{\mathbf U\in K}
    \|\Psi_h^\eta(\mathbf U)-\Phi_h(\mathbf U)\|_{H^{s'}}
    =
    O(h^{p+1}).
\]
\end{corollary}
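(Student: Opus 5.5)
The plan mirrors the proof of Theorem~\ref{thm:epfno_universal_approximation}, with the uniform tolerance \(\delta\) replaced by the order-\(h^{p+1}\) bound from Lemma~\ref{lem:taylor_consistent_residual}. First, for each sufficiently small \(h>0\), we take the FNO \(\mathcal G_{\theta(h)}\) given by Lemma~\ref{lem:taylor_consistent_residual}. Its residual update satisfies
\[
\sup_{\mathbf U\in K}\|R_{\theta(h)}(\mathbf U)-\Phi_h(\mathbf U)\|_{H^{s'}}\le (C_1+1)h^{p+1}.
\]
We then define \(\Psi_h^\eta(\mathbf U)=\Pi^\eta_{\mathbf c^\star}(R_{\theta(h)}(\mathbf U))\).

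Second, we use the fixed-point hypothesis \(\Pi^\eta_{\mathbf c^\star}(\Phi_h(\mathbf U))=\Phi_h(\mathbf U)\) to rewrite the one-step error as a difference of two projected states. Lemma~\ref{lem:projection_lipschitz} then gives
\[
\|\Psi_h^\eta(\mathbf U)-\Phi_h(\mathbf U)\|_{H^{s'}}
=\|\Pi^\eta_{\mathbf c^\star}(R_{\theta(h)}(\mathbf U))-\Pi^\eta_{\mathbf c^\star}(\Phi_h(\mathbf U))\|_{H^{s'}}
\le L_\Pi^\eta\,\|R_{\theta(h)}(\mathbf U)-\Phi_h(\mathbf U)\|_{H^{s'}} .
\]
Taking the supremum over \(K\) gives a bound of \(L_\Pi^\eta (C_1+1)h^{p+1}=O(h^{p+1})\).

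The main obstacle is that \(L_\Pi^\eta=1+\eta q(M_\lambda L_D+L_\lambda M_D)\) must be independent of \(h\); otherwise the \(O(h^{p+1})\) rate is lost. I would handle this by assuming, as in Theorem~\ref{thm:epfno_universal_approximation}, a single compact set \(\mathcal K\subset H^{s'}\) that contains both the reference states \(\Phi_h(\mathbf U)\) and the provisional states \(R_{\theta(h)}(\mathbf U)\) for all \(\mathbf U\in K\) and all small \(h\), with the constants \(M_\lambda,L_\lambda,M_D,L_D\) of Lemma~\ref{lem:projection_lipschitz} uniform on it. This is plausible: the Taylor expansion keeps \(\Phi_h(K)\) within \(O(h)\) of the compact set \(\{U:\mathbf U\in K\}\), and the provisional states lie within \(O(h^{p+1})\) of \(\Phi_h(K)\), so a fixed bounded neighbourhood suffices. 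I would state this uniformity explicitly as part of ``satisfies Lemma~\ref{lem:projection_lipschitz}''. With it, the constant \(L_\Pi^\eta(C_1+1)\) is \(h\)-independent and the corollary follows.
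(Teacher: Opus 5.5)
Your proposal is correct and follows the paper's proof: you use the fixed-point property to write the error as $\|\Pi^\eta_{\mathbf c^\star}(R_{\theta(h)}(\mathbf U))-\Pi^\eta_{\mathbf c^\star}(\Phi_h(\mathbf U))\|_{H^{s'}}$, bound it by $L_\Pi^\eta\|R_{\theta(h)}(\mathbf U)-\Phi_h(\mathbf U)\|_{H^{s'}}$ via Lemma~\ref{lem:projection_lipschitz}, and insert the $O(h^{p+1})$ residual bound from Lemma~\ref{lem:taylor_consistent_residual}. The paper leaves implicit your added requirement that $M_\lambda,L_\lambda,M_D,L_D$, and hence $L_\Pi^\eta$, be uniform in $h$ on one set containing all provisional and reference states; stating it as an explicit hypothesis is the right fix (a bounded neighbourhood in $H^{s'}$ is not compact, but the Lipschitz estimate uses only the uniform bounds, not compactness).
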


\begin{proof}
By Lemma~\ref{lem:projection_lipschitz},
\[
\begin{aligned}
    \|\Psi_h^\eta(\mathbf U)-\Phi_h(\mathbf U)\|_{H^{s'}}
    &=
    \|\Pi_{\mathbf c^\star}^{\eta}(R_{\theta(h)}(\mathbf U))
    -
    \Pi_{\mathbf c^\star}^{\eta}(\Phi_h(\mathbf U))\|_{H^{s'}} \\
    &\le
    L_\Pi^\eta
    \|R_{\theta(h)}(\mathbf U)-\Phi_h(\mathbf U)\|_{H^{s'}}.
\end{aligned}
\]
By Lemma~\ref{lem:taylor_consistent_residual},
\[
    \sup_{\mathbf U\in K}
    \|R_{\theta(h)}(\mathbf U)-\Phi_h(\mathbf U)\|_{H^{s'}}
    =
    O(h^{p+1}).
\]
Hence,
\[
\sup_{\mathbf U\in K}\|\Psi_h^\eta(\mathbf U)-\Phi_h(\mathbf U)\|_{H^{s'}}
    =
    O(h^{p+1}).
\]
\end{proof}

\begin{theorem}[\textbf{stability of EP-FNO}]
\label{thm:epfno_finite_time_rollout_stability}
Assume that the hypotheses of
Corollary~\ref{cor:taylor_consistent_epfno} hold, so that the EP-FNO
one-step map satisfies
$
    \sup_{\mathbf U\in K}
    \|\Psi_h^\eta(\mathbf U)-\Phi_h(\mathbf U)\|_{H^{s'}}
    =
    O(h^{p+1}) .
$
Assume also that the hypotheses of
Lemma~\ref{lem:projection_lipschitz} and
Lemma~\ref{lem:epfno_map_lipschitz} hold, so that the full EP-FNO map is
Lipschitz on \(K\) with constant
$L_\Psi^\eta=L_\Pi^\eta(1+L_{\mathcal G})$.
Suppose, in addition, that there exists \(C>0\) such that
$L_\Psi^\eta\le 1+Ch$.
Let
$
    U^{n+1}=\Phi_h(\mathbf U^n)
$
denote the reference rollout, and let
$
    \widehat U^{n+1}
    =
    \Psi_h^\eta(\widehat{\mathbf U}^n)
$
denote the EP-FNO autoregressive rollout. Suppose that both rollouts remain in
the compact rollout set \(K\) for all \(0\le t_n\le T\), and that they start
from the same initial history,
$
    \widehat{\mathbf U}^0=\mathbf U^0 .
$
Then, for every \(t_n=nh\le T\),
\[
    \|\widehat U^n-U^n\|_{H^{s'}}
    =
    O(h^p).
\]
Consequently, the EP-FNO rollout converges to the reference rollout uniformly
on every finite time interval \(0\le t_n\le T\) as \(h\to 0\).
\end{theorem}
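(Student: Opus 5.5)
The plan is the standard Lady Windermere's fan / discrete Gronwall argument: a local truncation error of size $O(h^{p+1})$ combined with a one-step Lipschitz bound $1+Ch$.

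\textbf{Error recursion.} Set $e^n:=\|\widehat{\mathbf U}^n-\mathbf U^n\|_{H^s}$, measured in the history-window norm introduced in the proof of Lemma~\ref{lem:epfno_map_lipschitz}. I will use the splitting
\[
\widehat U^{n+1}-U^{n+1}=\bigl[\Psi_h^\eta(\widehat{\mathbf U}^n)-\Psi_h^\eta(\mathbf U^n)\bigr]+\bigl[\Psi_h^\eta(\mathbf U^n)-\Phi_h(\mathbf U^n)\bigr].
\]
The first bracket is bounded by $L_\Psi^\eta e^n\le(1+Ch)e^n$, using Lemma~\ref{lem:epfno_map_lipschitz} and the hypothesis on $L_\Psi^\eta$; this is valid because both histories lie in $K$. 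The second bracket is the local defect. By Corollary~\ref{cor:taylor_consistent_epfno} it is at most $C'h^{p+1}$, with $C'$ uniform on $K$. Hence
\[
\|\widehat U^{n+1}-U^{n+1}\|\le(1+Ch)e^n+C'h^{p+1}.
\]

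\textbf{Passing to the window norm.} The new window $\widehat{\mathbf U}^{n+1}$ consists of $\widehat U^{n+1}$ together with the last $T_{\rm in}-1$ entries of the previous window. I would therefore work with the max-type quantity $E^n:=\max_{k\le n}\|\widehat U^k-U^k\|$. The window norm satisfies $e^n\le\sqrt{T_{\rm in}}\,E^n$, so it is cleaner to use the equivalent sup-over-window norm, in which $e^n\le E^n$. With that norm, $E^{n+1}\le(1+Ch)E^n+C'h^{p+1}$ and $E^0=0$.

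\textbf{Discrete Gronwall.} Unrolling the recursion gives
\[
E^n\le C'h^{p+1}\sum_{k=0}^{n-1}(1+Ch)^k\le C'h^{p+1}\,\frac{e^{Cnh}-1}{Ch}\le \frac{C'}{C}\bigl(e^{CT}-1\bigr)h^p
\]
for $nh\le T$. This is the claimed $O(h^p)$ bound, uniform on $[0,T]$, and it yields convergence as $h\to0$.

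\textbf{Main obstacle: norm bookkeeping.} The Lipschitz estimate maps the $H^s$ window norm to the $H^{s'}$ output norm. Closing the recursion therefore requires comparing $H^{s'}$ errors of new states with the $H^s$ window norm. I would handle this by taking $s=s'$, or more generally by assuming $\|\cdot\|_{H^s}\le c\|\cdot\|_{H^{s'}}$ on $K$ (i.e.\ $s\le s'$) and absorbing $c$ into $C$. The constant $\sqrt{T_{\rm in}}$ from the $\ell^2$ window norm would otherwise spoil the $1+Ch$ structure, which is why I use the sup-over-window norm. Two further points need care. First, the uniformity of the $O(h^{p+1})$ constant over $K$ must be kept as $h$ varies; this requires the FNO $\theta(h)$ chosen in Lemma~\ref{lem:taylor_consistent_residual} to still satisfy the Lipschitz hypothesis with a constant independent of $h$. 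Second, the standing assumption that both rollouts stay in $K$ is what licenses each application of the Lipschitz bound.
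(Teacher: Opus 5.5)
Your proposal is correct and is essentially the paper's argument: the same splitting into a Lipschitz (stability) term and the $O(h^{p+1})$ one-step defect from Corollary~\ref{cor:taylor_consistent_epfno}, giving $e_{n+1}\le(1+Ch)e_n+O(h^{p+1})$, followed by the same discrete Gronwall bound via $(1+Ch)^n\le e^{CT}$. The paper's proof silently identifies the $H^s$ history-window error with the $H^{s'}$ error of the latest state, and your sup-over-window quantity $E^n$ together with the convention $s\le s'$ supplies the bookkeeping that step needs. One caveat: a comparison constant $c>1$ could not be absorbed into the factor $1+Ch$, but with the standard $H^s$ norms $c=1$, so your argument goes through.
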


\begin{proof}
Let
\[
    e_n=\|\widehat U^n-U^n\|_{H^{s'}} .
\]
Using the definitions of the reference and EP-FNO updates, we obtain
\[
\begin{aligned}
    e_{n+1}
    &=
    \|\widehat U^{n+1}-U^{n+1}\|_{H^{s'}} \\
    &=
    \|\Psi_h^\eta(\widehat{\mathbf U}^n)-\Phi_h(\mathbf U^n)\|_{H^{s'}} \\
    &\le
    \|\Psi_h^\eta(\widehat{\mathbf U}^n)
    -
    \Psi_h^\eta(\mathbf U^n)\|_{H^{s'}}
    +
    \|\Psi_h^\eta(\mathbf U^n)-\Phi_h(\mathbf U^n)\|_{H^{s'}} .
\end{aligned}
\]
By Lemma~\ref{lem:epfno_map_lipschitz}, the EP-FNO map has Lipschitz constant
$L_\Psi^\eta=L_\Pi^\eta(1+L_{\mathcal G})$. Since
$L_\Psi^\eta\le 1+Ch$, and since the one-step approximation error is
\(O(h^{p+1})\), we have
\[
    e_{n+1}
    \le
    (1+Ch)e_n+O(h^{p+1}) .
\]
Since the initial histories agree, \(e_0=0\). Iterating gives
\[
    e_n
    \le
    O(h^{p+1})\sum_{k=0}^{n-1}(1+Ch)^k
    =
    O(h^{p+1})
    \frac{(1+Ch)^n-1}{Ch}.
\]
For \(t_n=nh\le T\), we have
\[
    (1+Ch)^n\le e^{Cnh}\le e^{CT}.
\]
Therefore,
\[
    e_n
    \le
    O(h^{p+1})
    \frac{e^{CT}-1}{Ch}
    =
    O(h^p).
\]
This proves the finite-time rollout stability estimate and the uniform
finite-time convergence as \(h\to 0\).
\end{proof}

\section{Numerical Experiments}
\label{sec:numerical_experiments}

\subsection{Experimental Setup}
\label{subsec:experimental_setup}

In this section, we compare the proposed EP-FNO framework with the standard FNO model on the
Zakharov--Kuznetsov, Kadomtsev--Petviashvili, and Sine--Gordon equations.

The training data are generated from parameterized families of exact solutions. For ZK and KP, we use line-soliton and cylindrical pulse profiles with varying wave parameters, whereas for Sine--Gordon we use kink--antikink and traveling domain-wall solutions.

\subsubsection{Training configuration}\label{trainconfig}

The baseline model is a two-dimensional FNO with four spectral layers, where each layer combines spectral convolution, pointwise linear transformation, and GELU activation. EP-FNO adapts this model by retaining the residual update and applying an invariant projection after each prediction made by the FNO framework.

For first-order-in-time equations such as ZK and KP, the
network predicts the scalar field $u$, while for the sine--Gordon equation we write the system in first-order form and predict the coupled state $(u,v)$,
where $v=u_t$. We intend to learn the one-step solution operator 
\begin{align}
    \mathcal G^{\dagger}:
\bigl(u^{n-T_{\rm in}+1},\ldots,u^n\bigr)
\mapsto u^{n+1}.
\end{align}
Long-time predictions are then obtained by autoregressive rollout. In all our experiments, the models use $T_{\rm in}=10$ previous snapshots as input, $12$ retained Fourier modes,
channel width $24$, batch size $1$, and Adam optimization with weight decay.
A step learning-rate scheduler reduces the learning rate by a factor of $0.5$
every $8$ epochs.


For the parametric experiments, each realization is defined by one sampled parameter tuple. From each realization, we extract \(15\) temporal windows to
form autoregressive input--output pairs. Unless otherwise stated, the dataset contains \(40\) parameter realizations, giving \(600\) samples in total. We split the data by realization so that all windows associated with a given parameter tuple remain in the same split. Specifically, \(32\) realizations are used for training, \(4\) for validation, and \(4\) for testing, corresponding to \(480\), \(60\), and \(60\) temporal windows, respectively. Both models are trained for \(35\) epochs, and the checkpoint with the lowest validation loss is used for testing. All experiments are run on a single NVIDIA A40 GPU with $8$ CPU cores and $96$ GB RAM.

\subsection{Zakharov--Kuznetsov Equation}
\label{app:zk_line_soliton}
Let $T >0$ and $\Omega\subset\mathbb{R}^2$. Consider the two-dimensional Zakharov--Kuznetsov (ZK) equation, 
given by
\begin{equation}
\begin{aligned}
u_t + u u_x + \varepsilon\left(u_{xxx}+u_{xyy}\right) &= 0,
&& (x,y)\in\Omega,\quad t\in(0,T],\\
u(x,y,0) &= u_0(x,y),
&& (x,y)\in\Omega,\\
u(x+L_x,y,t) &= u(x,y,t),
&& y\in[0,L_y],\quad t\in[0,T],\\
u(x,y+L_y,t) &= u(x,y,t),
&& x\in[0,L_x],\quad t\in[0,T],
\end{aligned}
\label{eq:zk_equation_fno_style}
\end{equation}
where \(\varepsilon>0\) is the dispersion parameter. 
The ZK equation conserves mass and Hamiltonian, which can be written as 
\begin{equation}
    \mathcal M_{\rm ZK}[u]=\int_{\Omega}u(x,y,t)\,dx\,dy,
\end{equation}
and 
\begin{equation}
    \mathcal H_{\rm ZK}[u]
=
\int_{\Omega}
\left[
\frac{\varepsilon}{2}(u_x^2+u_y^2)-\frac16u^3
\right]\,dx\,dy.
\end{equation}
respectively.

\subsubsection*{Line Soliton}
We consider \eqref{eq:zk_equation_fno_style} on the domain \([0,L_x]\times[0,L_y]\) with \(T=3\), \(L_x=L_y=8\), and \(\varepsilon=0.01\).
Each initial condition is drawn from a parameterized family of tilted line-soliton profiles. We write
\(\xi=(c,\theta,x_0,y_0)\), where the parameters are sampled independently as $c\sim \mathcal U(0.75,1.25),\, \theta\sim \mathcal U(-0.08,0.08),\, x_0\sim \mathcal U(1.0,3.0),\, y_0\sim \mathcal U(2.5,5.5).$ The corresponding initial condition is
\begin{equation}
    u_0(x,y;\xi)
    =
    3c\,\sech^2
    \left[
        \frac12
        \sqrt{\frac{c}{\varepsilon}}
        \left(
            (x-x_0)\cos\theta+(y-y_0)\sin\theta
        \right)
    \right],
    \label{eq:zk_line_ic_family}
\end{equation}
and the corresponding exact solution given by 
\begin{equation}
    u(x,y,t;\xi)
    =
    3c\,\sech^2
    \left[
        \frac12
        \sqrt{\frac{c}{\varepsilon}}
        \left(
            (x-ct-x_0)\cos\theta+(y-y_0)\sin\theta
        \right)
    \right].
    \label{eq:zk_line_solution_family}
\end{equation}

\begin{table}[H]
\centering
\caption{Relative $L^2$ error comparison of EP-FNO and FNO solutions against the exact solution.}
\label{tab:epfno_fno_line_soliton_l2_error_comparison}
\renewcommand{\arraystretch}{1.15}
\begin{tabular}{lccccccc}
\toprule[1.5pt]
\textbf{Method}   
& \textbf{$t=0.00$} & \textbf{$t=0.50$} & \textbf{$t=1.00$} & \textbf{$t=1.50$} 
& \textbf{$t=2.00$} & \textbf{$t=2.50$} & \textbf{$t=3.00$} \\
\midrule[1.2pt]
EP-FNO  
& \textbf{0.0000} & \textbf{0.0100} & \textbf{0.0221} 
& \textbf{0.0295} & \textbf{0.0362} & \textbf{0.0375} 
& \textbf{0.0449} \\

FNO  
& \textbf{0.0000} & 0.0446 & 0.0577 & 0.0716 
& 0.0832 & 0.1073 & 0.1545 \\
\bottomrule[1.5pt]
\end{tabular}
\end{table}

Table~\ref{tab:epfno_fno_line_soliton_l2_error_comparison} shows that EP-FNO consistently gives smaller relative $L^2$ errors than the standard FNO for this result. Both models start from the exact initial condition, so the error is zero at $t=0$, but the difference becomes clear as time evolves. At the final time $t=3$, the EP-FNO error is $0.0449$, while the FNO error increases to $0.1545$. Thus, the EP-FNO final-time error is about $3.4$ times smaller than that of the baseline FNO.
Figure~\ref{fig:zk_line} shows that the EP-FNO prediction preserves the main ridge-like
structure of the line soliton over the rollout interval. The wave profile remains sharp
and travels in the correct direction, indicating that the projection step does not
destroy the qualitative soliton dynamics. In addition, Figure~\ref{fig:zklineHandm}
shows that the mass and Hamiltonian remain preserved during the EP-FNO rollout. This
confirms that the projection step helps control invariant drift while maintaining the
correct qualitative propagation of the ZK line soliton.

\subsubsection*{Cylindrical Solitary Pulse}
The second ZK benchmark follows the cylindrically symmetric solitary pulse example used
by Chen et al.~\cite{chen2011multi}. This example is more demanding than the line-soliton
case because the solution is localized in both spatial directions and its evolution
requires the model to preserve the radial pulse structure during propagation.

We consider a solution of \eqref{eq:zk_equation_fno_style} corresponding to cylindrically symmetric solitary pulses on the domain \(L_x=L_y=32\) over the time interval \(T=3\). Let \(\eta=(c,x_0,y_0)\) denote the parameter vector. The initial condition is then written as 

\begin{equation}
u_0(x,y; \xi)
=
\frac{c}{3}
\sum_{n=1}^{10}
a_{2n}
\left[
\cos
\left(
2n\arccot
\left(
\frac{\sqrt{c}}{2}
\sqrt{(x-x_0)^2+(y-y_0)^2}
\right)
\right)
-1
\right].
\label{eq:zk_cylindrical_ic_family}
\end{equation}
The parameters are sampled independently as $c\sim\mathcal U(3.5,4.5),\,x_0\sim\mathcal U(8,12),\, y_0\sim\mathcal U(14,18).$ 

\begin{table}[H]
\centering
\caption{Coefficients \(a_{2n}\) used in the cylindrically symmetric solitary
solution of the ZK equation, following Chen et al.~\cite{chen2011multi}.}
\label{tab:zk_a2n_coefficients}
\renewcommand{\arraystretch}{1.15}
\setlength{\tabcolsep}{4pt}
\resizebox{\linewidth}{!}{%
\begin{tabular}{c|cccccccccc}
\hline
\(n\) & 1 & 2 & 3 & 4 & 5 & 6 & 7 & 8 & 9 & 10 \\
\hline
\(a_{2n}\)
& \(-1.25529873\)
& \(0.21722635\)
& \(0.06452543\)
& \(0.00540862\)
& \(-0.00332515\)
& \(-0.00281281\)
& \(-0.00138352\)
& \(-0.00070289\)
& \(-0.00020451\)
& \(-0.00003053\) \\
\hline
\end{tabular}%
}
\end{table}
\noindent 
The coefficients \(a_{2n}\) appearing in the finite-series representation are
listed in Table~\ref{tab:zk_a2n_coefficients}.

\begin{table}[H]
\centering
\caption{Relative $L^2$ error comparison of EP-FNO and FNO solutions against the exact solution.}
\label{tab:epfno_fno_cylindrical_zk_l2_error_comparison}
\renewcommand{\arraystretch}{1.15}
\begin{tabular}{lccccccc}
\toprule[1.5pt]
\textbf{Method} 
& \textbf{$t=0.00$} & \textbf{$t=0.50$} & \textbf{$t=1.00$} & \textbf{$t=1.50$} 
& \textbf{$t=2.00$} & \textbf{$t=2.50$} & \textbf{$t=3.00$} \\
\midrule[1.2pt]
EP-FNO 
& $\mathbf{0.0000}$ 
& $\mathbf{0.0322}$ 
& $\mathbf{0.0514}$ 
& $\mathbf{0.0755}$ 
& $\mathbf{0.1015}$ 
& $\mathbf{0.1177}$ 
& $\mathbf{0.1175}$ \\
FNO 
& \textbf{$0.0000$} 
& $0.0802$ 
& $0.1485$ 
& $0.2367$ 
& $0.3347$ 
& $0.4101$ 
& $0.4754$ \\
\bottomrule[1.5pt]
\end{tabular}
\end{table}

Table~\ref{tab:epfno_fno_cylindrical_zk_l2_error_comparison} shows that while both models start with comparable short-time errors, the standard FNO error grows steadily to $4.75\times10^{-1}$ at $t=3$, whereas EP-FNO achieves a final error of only $1.18\times10^{-1}$. These results demonstrate that explicit invariant preservation can substantially improve both physical fidelity and long-time prediction accuracy for two-dimensional dispersive wave dynamics.

Figure~\ref{fig:zk_cylindrical_epfno_surface_result} shows that the EP-FNO rollout captures the main localized
pulse and its motion across the domain. The predicted solution captures the same physical behaviour and propagation in \cite{chen2011multi} and does not exhibit severe spreading, which suggests that the
structure-informed projection helps control long-time degradation of the learned
trajectory. In addition, Figure~\ref{fig:zkcyclHandm} shows that the mass and Hamiltonian
are preserved during the rollout, further confirming that the projection step helps
stabilize the learned dynamics while maintaining the qualitative behavior of the
cylindrically symmetric solitary wave.
\begin{figure}[htbp]
    \centering
    \includegraphics[width=0.6\linewidth]{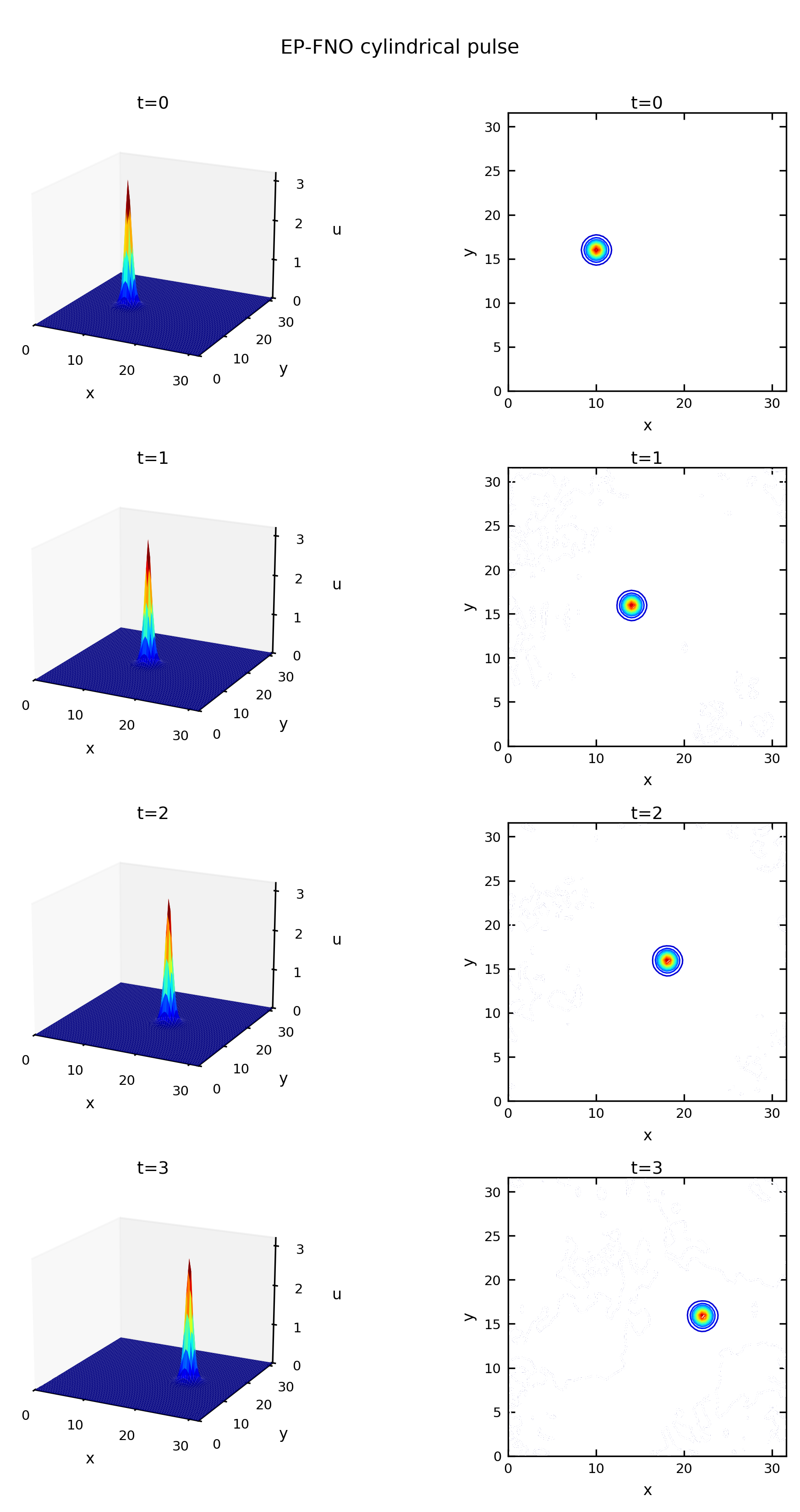}
        \caption{EP-FNO rollout for the Cylindrical ZK pulse benchmark at representative times $t = 0, 1, 2, \text{ and } 3.$ }
        \label{fig:zk_cylindrical_epfno_surface_result}
\end{figure}

\begin{figure}[htbp]
    \centering
    \begin{subfigure}{0.48\textwidth}
        \centering
        \includegraphics[width=\linewidth]{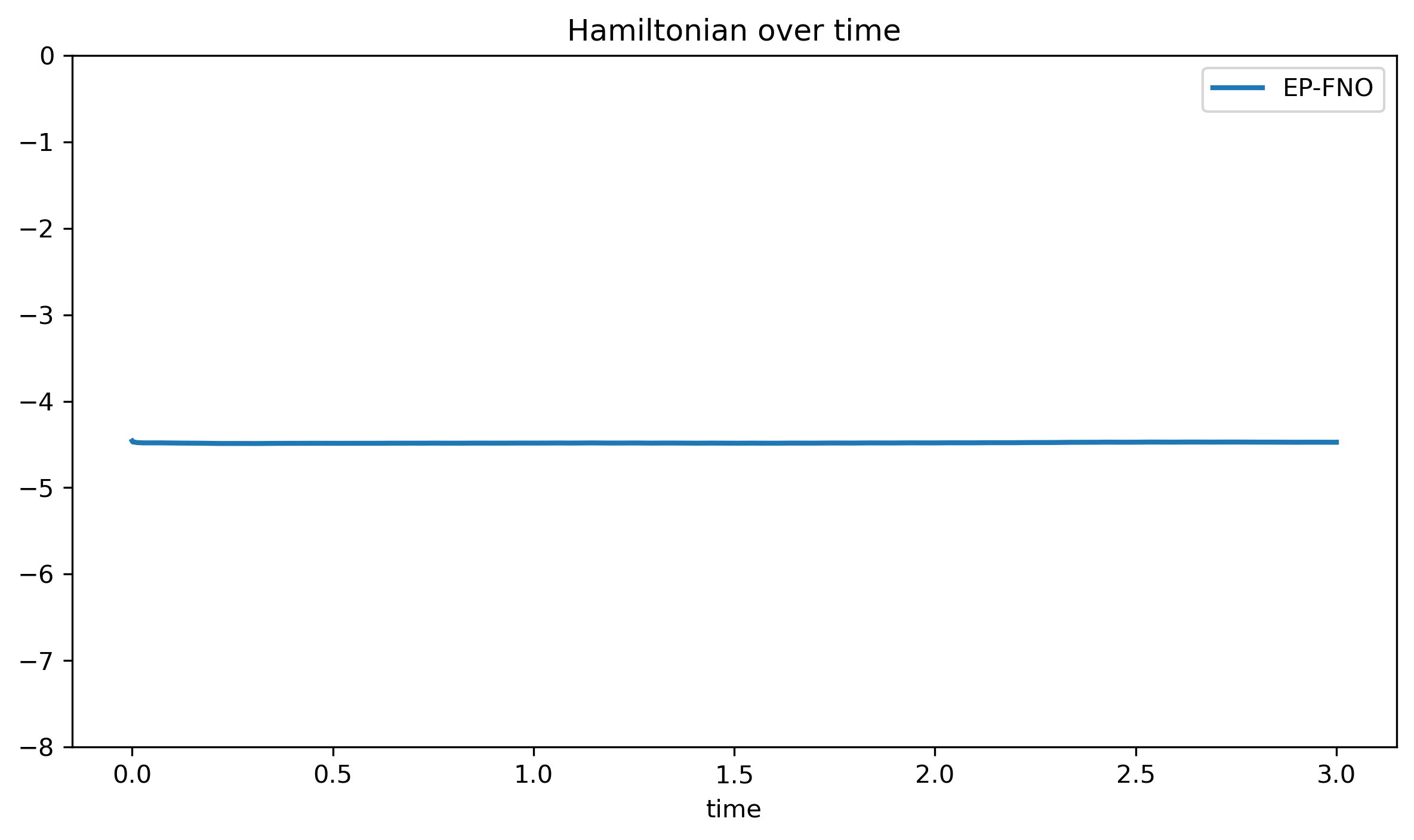}
        \caption{}
        \label{fig:zkcyclh}
    \end{subfigure}
    \hfill
     \begin{subfigure}{0.48\textwidth}
        \centering
        \includegraphics[width=\linewidth]{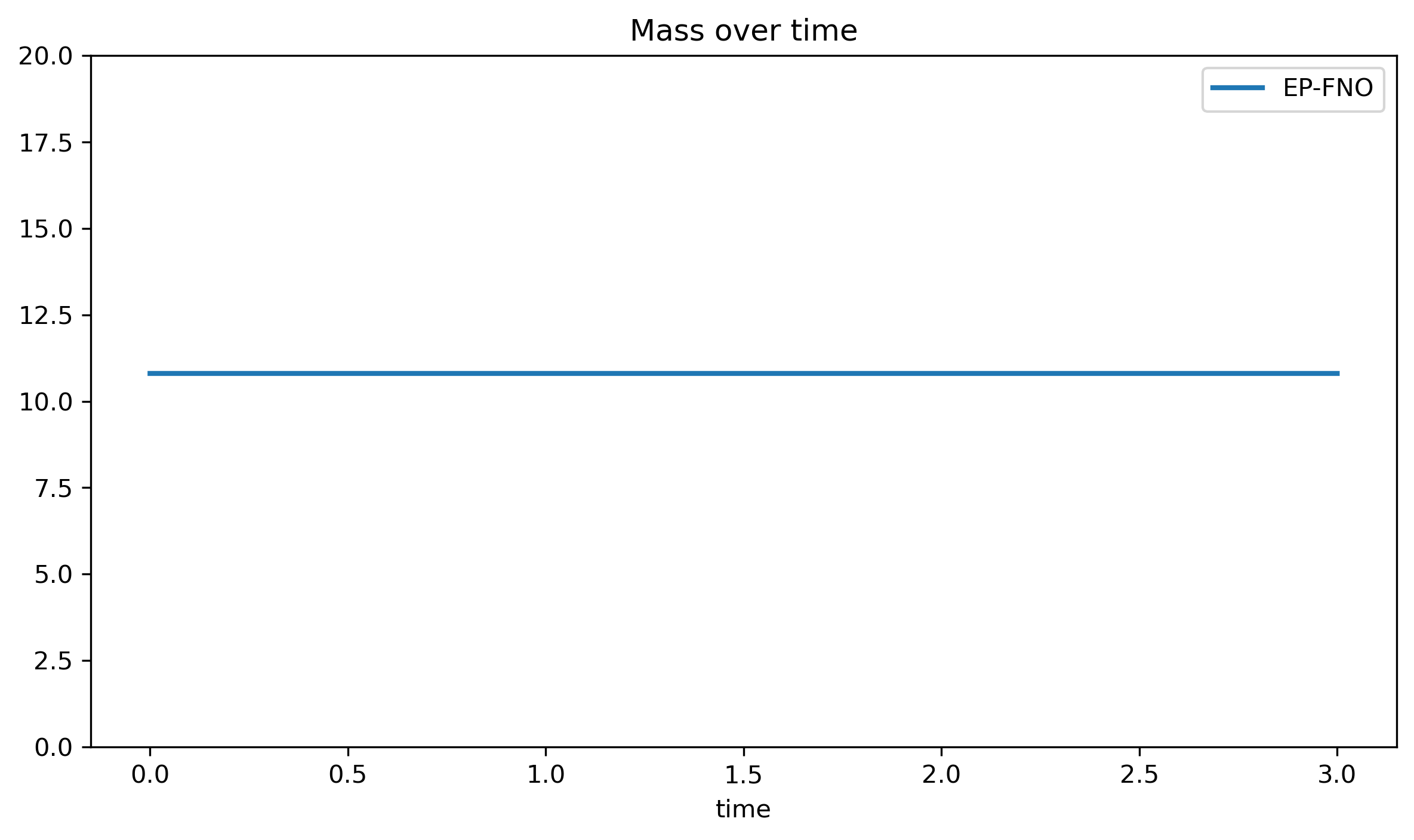}
        \caption{}
        \label{fig:zkcycl}
    \end{subfigure}

    \caption{Evolution of the Hamiltonian and mass during the EP-FNO rollout for the ZK cylindrical pulse benchmark.}
    \label{fig:zkcyclHandm}
\end{figure}

\subsection{Kadomtsev--Petviashvili Equation}
\label{subsubsec:kp_line_results}

Let $T>0$ and $\Omega\subset\mathbb{R}^2$. Consider the two-dimensional Kadomtsev--Petviashvili (KP) equation, given by
\begin{equation}
\begin{aligned}
\left(u_t+(f(u))_x+u_{xxx}\right)_x+\sigma u_{yy} &= 0,
&& (x,y)\in\Omega,\quad t\in(0,T],\\
u(x,y,0) &= u_0(x,y),
&& (x,y)\in\Omega,\\
u(x+L_x,y,t) &= u(x,y,t),
&& y\in[0,L_y],\quad t\in[0,T],\\
u(x,y+L_y,t) &= u(x,y,t),
&& x\in[0,L_x],\quad t\in[0,T].
\end{aligned}
\label{eq:kp_equation_fno_style}
\end{equation}
The KP equation conserves mass and Hamiltonian given by 
\begin{equation}
    \mathcal M_{\rm KP}[u]=\int_{\Omega}u(x,y,t)\,dx\,dy
\end{equation}
and 
\begin{equation}
    \mathcal H_{\rm KP}[u]=\int_{\Omega}
\left[\frac12u_x^2-F(u)-\frac{\sigma}{2}
(\partial_x^{-1}u_y)^2\right]\,dx\,dy
\end{equation}
respectively. In the above, $F'(u)=f(u)$. 
We choose $f(u)=3u^2$, so that $F(u)=u^3$. 

\subsubsection*{Line Soliton}
We consider the KP line-soliton family parameterized by
\(\xi=(k,\lambda,x_0,\sigma)\). The exact solution is given by
\begin{equation}
    u(x,y,t;\xi)
    =
    2k^2
    \sech^2
    \left[
        k
        \left(
            x+\lambda y
            -
            (4k^2+\sigma\lambda^2)t
            -
            x_0
        \right)
    \right].
    \label{eq:kp_line_solution_family}
\end{equation}
The corresponding initial condition is
\begin{equation}
    u_0(x,y;\xi)
    =
    u(x,y,0;\xi)
    =
    2k^2
    \sech^2
    \left[
        k(x+\lambda y-x_0)
    \right].
    \label{eq:kp_line_ic_family}
\end{equation}
For each realization, the parameters are chosen as
\(k\sim \mathcal U(0.90,1.10)\), \(\lambda=0\), \(x_0=10\), and
\(\sigma=-3\).

\begin{table}[H]
\centering
\caption{Relative $L^2$ error comparison of EP-FNO and FNO solutions against the exact solution.}
\label{tab:epfno_fno_kp_line_l2_error_comparison}
\renewcommand{\arraystretch}{1.15}
\begin{tabular}{lccccccc}
\toprule[1.5pt]
\textbf{Method} 
& \textbf{$t=0.00$} & \textbf{$t=0.50$} & \textbf{$t=1.00$} & \textbf{$t=1.50$} 
& \textbf{$t=2.00$} & \textbf{$t=2.50$} & \textbf{$t=3.00$} \\
\midrule[1.2pt]
EP-FNO 
& \textbf{0.0000} & \textbf{0.1622} & \textbf{0.2896} 
& \textbf{0.3764} & \textbf{0.4300} & \textbf{0.5108} 
& \textbf{0.6526} \\
FNO    
& \textbf{0.0000} & 0.3466 & 0.4801 
& 0.7228 & 0.9645 & 0.9724 
& 0.9768 \\
\bottomrule[1.5pt]
\end{tabular}
\end{table}

Table~\ref{tab:epfno_fno_kp_line_l2_error_comparison} shows the relative $L^2$ error between EP-FNO and FNO compared with the exact solution \eqref{eq:kp_line_solution_family}. We can observe that the error of the FNO model grows rapidly as time increases and approaches unity by $t=3$, indicating severe degradation of the predicted solution. In contrast, EP-FNO exhibits a significantly slower growth in error, maintaining substantially better agreement with the exact solution at all reported times. 

\begin{figure}[H]
    \centering
    \includegraphics[width=0.6\linewidth]{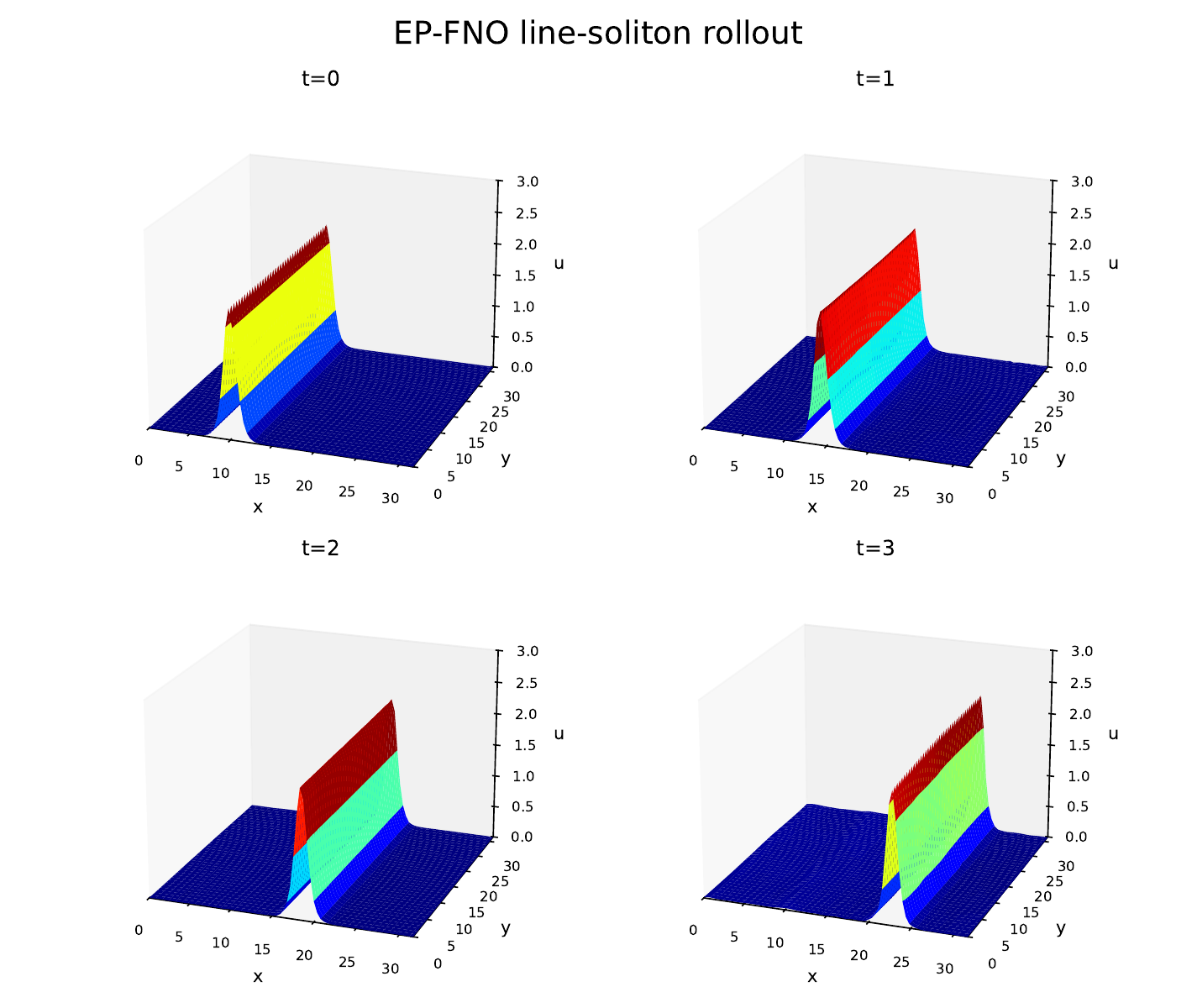}
    \caption{EP-FNO rollout for the KP line-soliton benchmark at representative times \(t=0,1,2,\) and \(3\).}
    \label{fig:kp_epfno_surface_result}
\end{figure}

\begin{figure}[H]
\centering

\begin{subfigure}{0.49\textwidth}
    \centering
    \includegraphics[width=\linewidth]{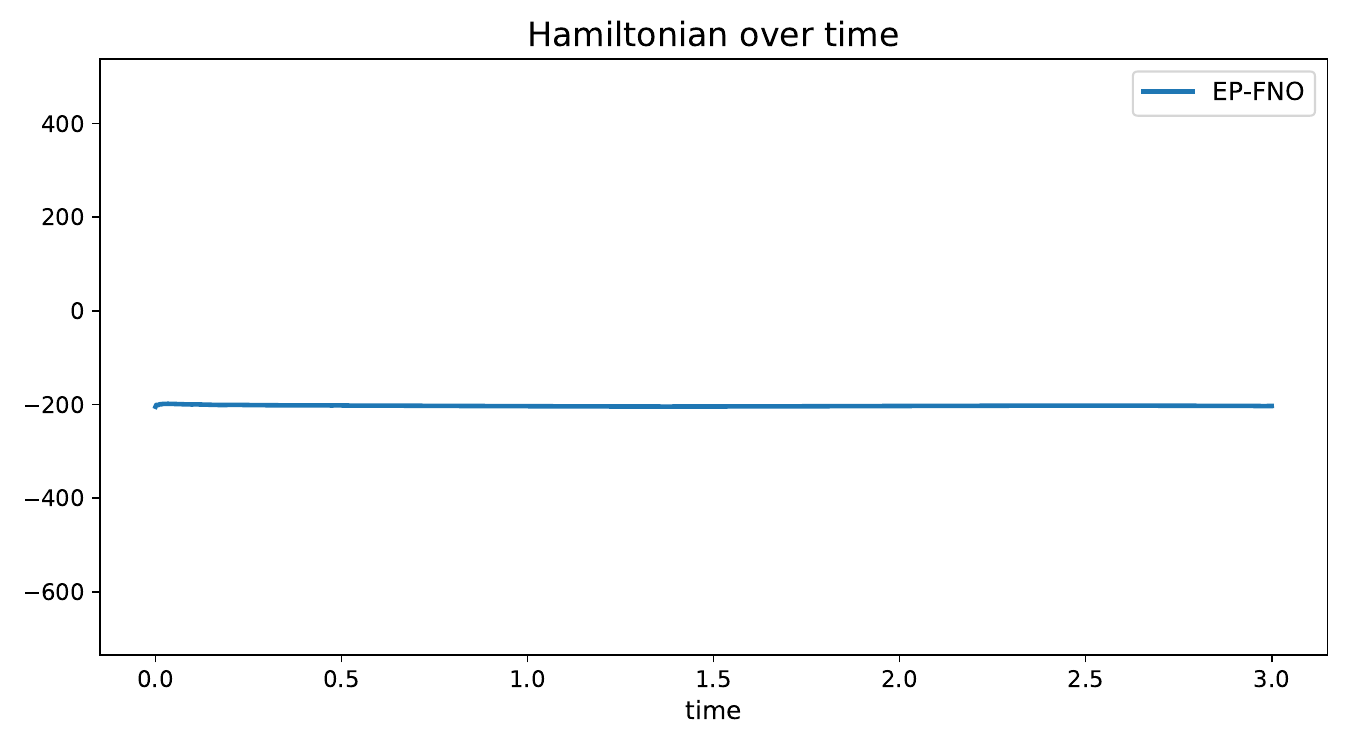}
    \caption{}
\end{subfigure}
\hfill
\begin{subfigure}{0.49\textwidth}
    \centering
    \includegraphics[width=\linewidth]{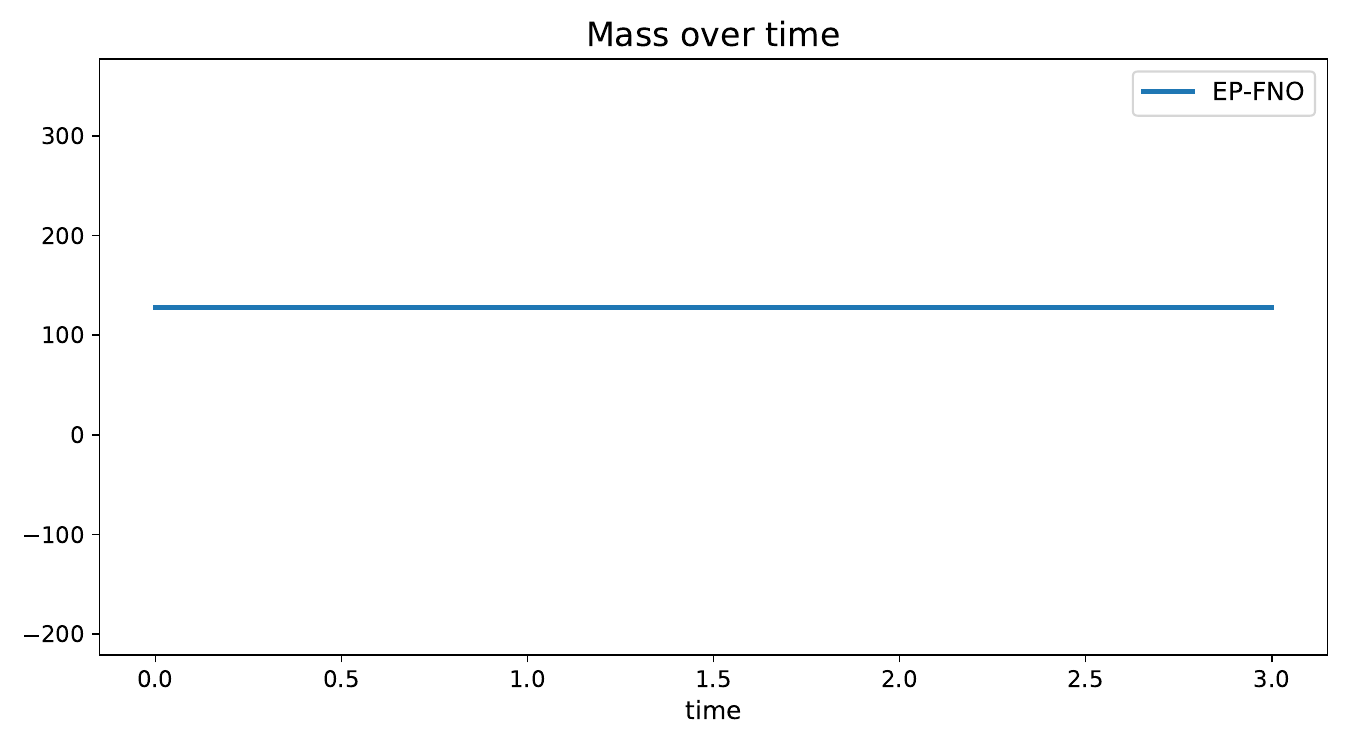}
    \caption{}
\end{subfigure}

\vspace{2mm}

\caption{Evolution of the Hamiltonian and mass during the EP-FNO rollout for the KP line-soliton benchmark.}
\label{fig:kp_lineham_and_mass}
\end{figure}

Figure~\ref{fig:kp_epfno_surface_result} shows the EP-FNO prediction at representative times during the rollout. The line soliton propagates across the computational domain while retaining its coherent structure throughout the prediction horizon. As shown in Figure~\ref{fig:kp_lineham_and_mass}, both the Hamiltonian and the mass remain nearly constant over the entire rollout interval, indicating that the structure-preserving projection effectively controls invariant drift during autoregressive prediction. For comparison, the corresponding FNO rollout and invariant evolution are provided in Appendix~\ref{appenKp}.

\subsection{Sine--Gordon Equation}
\label{subsubsec:sg_results}

Let $T>0$ and $\Omega\subset\mathbb{R}^2$. Consider the two-dimensional sine--Gordon (SG) equation, given by
\begin{equation}
\begin{aligned}
u_{tt}-u_{xx}-u_{yy}+\sin(u) &= 0,
&& (x,y)\in\Omega,\quad t\in(0,T],\\
u(x,y,0) &= u_0(x,y),
&& (x,y)\in\Omega,\\
u_t(x,y,0) &= v_0(x,y),
&& (x,y)\in\Omega,\\
u(x+L_x,y,t) &= u(x,y,t),
&& y\in[0,L_y],\quad t\in[0,T],\\
u(x,y+L_y,t) &= u(x,y,t),
&& x\in[0,L_x],\quad t\in[0,T].
\end{aligned}
\label{eq:sg_2d_equation}
\end{equation}
Setting the velocity variable as $v=u_t$, the SG equation can be written as the first-order system
\begin{equation}
\begin{aligned}
u_t &= v,
&& (x,y)\in\Omega,\quad t\in(0,T],\\
v_t &= u_{xx}+u_{yy}-\sin(u),
&& (x,y)\in\Omega,\quad t\in(0,T],\\
u(x,y,0) &= u_0(x,y),
&& (x,y)\in\Omega,\\
v(x,y,0) &= v_0(x,y),
&& (x,y)\in\Omega,\\
u(x+L_x,y,t) &= u(x,y,t),
&& y\in[0,L_y],\quad t\in[0,T],\\
u(x,y+L_y,t) &= u(x,y,t),
&& x\in[0,L_x],\quad t\in[0,T].
\end{aligned}
\label{eq:sg_2d_first_order_system}
\end{equation}
The sine--Gordon Hamiltonian is
\begin{equation}
    \mathcal H_{\rm SG}[u,v]
    =
    \int_{\Omega}
    \left[
        \frac12 v^2
        +
        \frac12\left(u_x^2+u_y^2\right)
        +
        1-\cos(u)
    \right]\,dx\,dy .
    \label{eq:sg_hamiltonian}
\end{equation}
which is a conserved quantity. 
The Hamiltonian consists of kinetic, gradient, and potential energy contributions,
which we write as
\begin{equation}
    \mathcal H_{\rm SG}[u,v]
    =
    E_{\rm kin}[v]
    +
    E_{\rm grad}[u]
    +
    E_{\rm pot}[u],
    \label{eq:sg_hamiltonian_decomposition}
\end{equation}
where
\begin{equation}
    E_{\rm kin}[v]
    =
    \frac12\int_{\Omega} v^2\,dx\,dy,
    \qquad
    E_{\rm grad}[u]
    =
    \frac12\int_{\Omega}
    \left(u_x^2+u_y^2\right)\,dx\,dy,
    \label{eq:sg_kinetic_gradient_energy}
\end{equation}
and
\begin{equation}
    E_{\rm pot}[u]
    =
    \int_{\Omega}
    \left(1-\cos(u)\right)\,dx\,dy .
    \label{eq:sg_potential_energy}
\end{equation}
In particular, \(E_{\rm grad}[u]\) measures the spatial-gradient energy of the
domain-wall profile and is used to monitor whether the learned rollout preserves
the sharp transition structure of the sine--Gordon solution.

\subsubsection*{Traveling domain-wall}
The traveling domain-wall profile used in this experiment follows the exact two-dimensional sine--Gordon domain-wall solution described by Johnson, Suarez, and Biswas~\cite{johnson2012new}.
We consider a traveling domain-wall family parameterized by
\(\xi=(B_1,B_2,x_0,y_0,t_0)\). Define
\[
    B_3=\sqrt{B_1^2+B_2^2-1},
\]
with \(B_1^2+B_2^2>1\) enforced so that \(B_3\) is real. The exact displacement field is
\begin{equation}
    u(x,y,t;\xi)
    =
    4\arctan\!\left(\exp(\theta(x,y,t;\xi))\right),
    \label{eq:sg_2d_exact_solution}
\end{equation}
where $ \theta(x,y,t;\xi)= B_1(x-x_0)+B_2(y-y_0)-B_3(t-t_0).$ The associated velocity field is given by
\begin{equation}
    v(x,y,t;\xi)
    =
    -2B_3\sech(\theta(x,y,t;\xi)).
    \label{eq:sg_2d_exact_velocity}
\end{equation}
The corresponding initial conditions are
\begin{equation}
    u_0(x,y;\xi)
    =
    4\arctan\!\left(\exp(\theta(x,y,0;\xi))\right),
    \label{eq:sg_2d_initial_displacement}
\end{equation}
and
\begin{equation}
    v_0(x,y;\xi)
    =
    v(x,y,0;\xi)
    =
    -2B_3\sech(\theta(x,y,0;\xi)).
    \label{eq:sg_2d_initial_velocity}
\end{equation}
For each realization, the parameters are sampled independently as $B_1\sim\mathcal U(0.90,1.20), B_2\sim\mathcal U(0.70,1.00),\, x_0\sim\mathcal U(-1.0,1.0),\, y_0\sim\mathcal U(-1.0,1.0),$ and $t_0\sim\mathcal U(-0.5,0.5).$

\begin{table}[H] 
\centering
\caption{Relative \(L^2\) error comparison of EP-FNO and FNO for the SG solution field \(u\).}
\label{tab:epfno_fno_sine_gordon_2d_u_l2_error_comparison}
\renewcommand{\arraystretch}{1.15}
\begin{tabular}{lccccc}
\toprule[1.5pt]
\textbf{Model} 
& \textbf{\(t=0.00\)} & \textbf{\(t=1.00\)} & \textbf{\(t=2.00\)} 
& \textbf{\(t=3.00\)} & \textbf{\(t=4.00\)} \\
\midrule[1.2pt]
EP-FNO 
& \textbf{0.000000} & 0.002221 & \textbf{0.002501} 
& \textbf{0.004175} & \textbf{0.006927} \\
FNO    
& \textbf{0.000000} & \textbf{0.001838} & 0.003447 
& 0.006636 & 0.017220 \\
\bottomrule[1.5pt]
\end{tabular}
\end{table}

Table~\ref{tab:epfno_fno_sine_gordon_2d_u_l2_error_comparison} shows that FNO gives a slightly smaller error at early time \(t=1\), but EP-FNO is more accurate over longer rollouts. In particular, EP-FNO exhibits slower error growth and achieves a substantially smaller error at \(t=4\). This indicates that the Hamiltonian correction improves the long-time stability of the autoregressive predictions.

\begin{figure}[H]
\centering

\begin{subfigure}{0.41\textwidth}
    \centering
    \includegraphics[width=\linewidth]{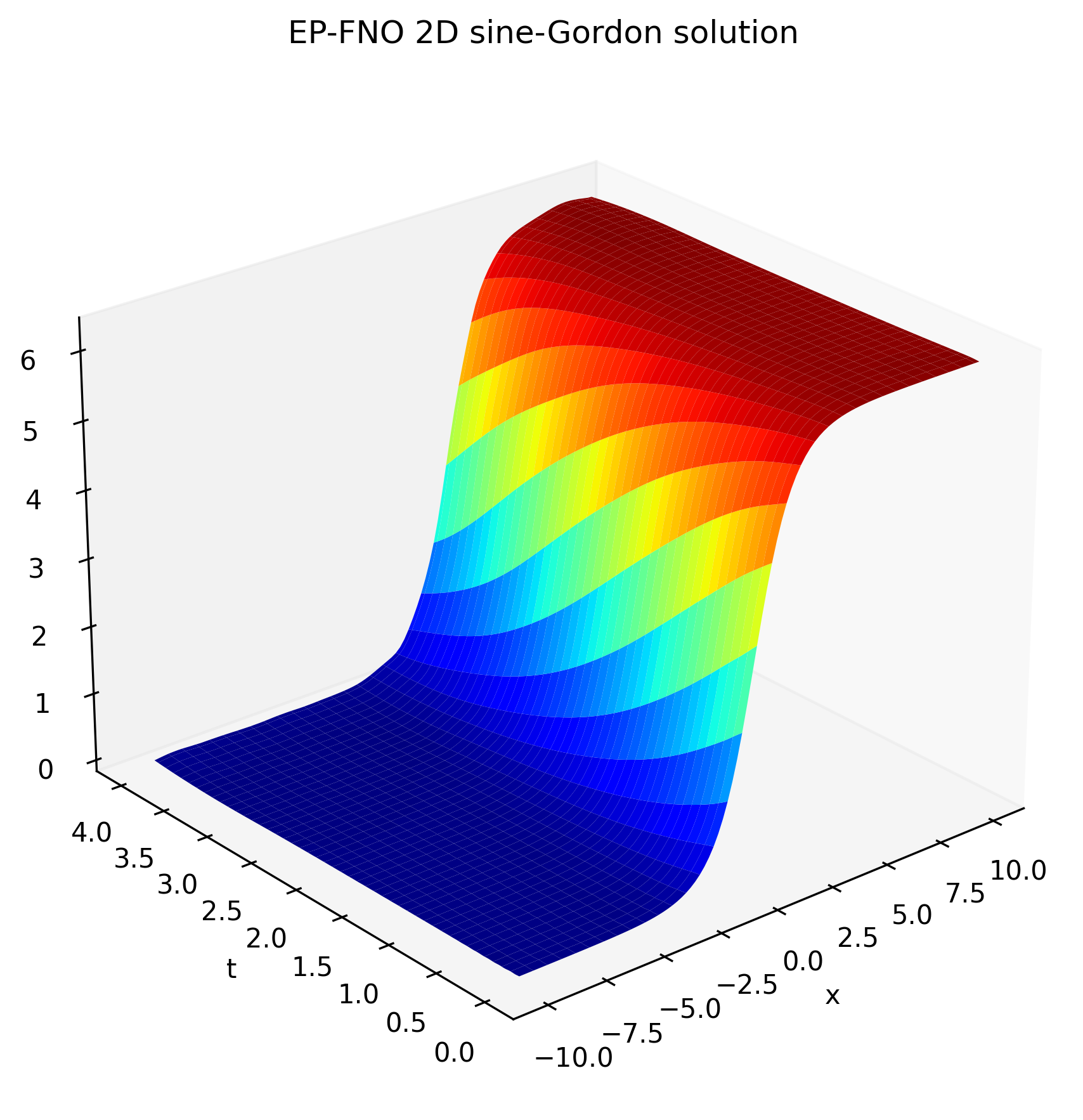}
    \caption{EP-FNO spatiotemporal solution.}
\end{subfigure}
\hfill
\begin{subfigure}{0.53\textwidth}
    \centering
    \includegraphics[width=\linewidth]{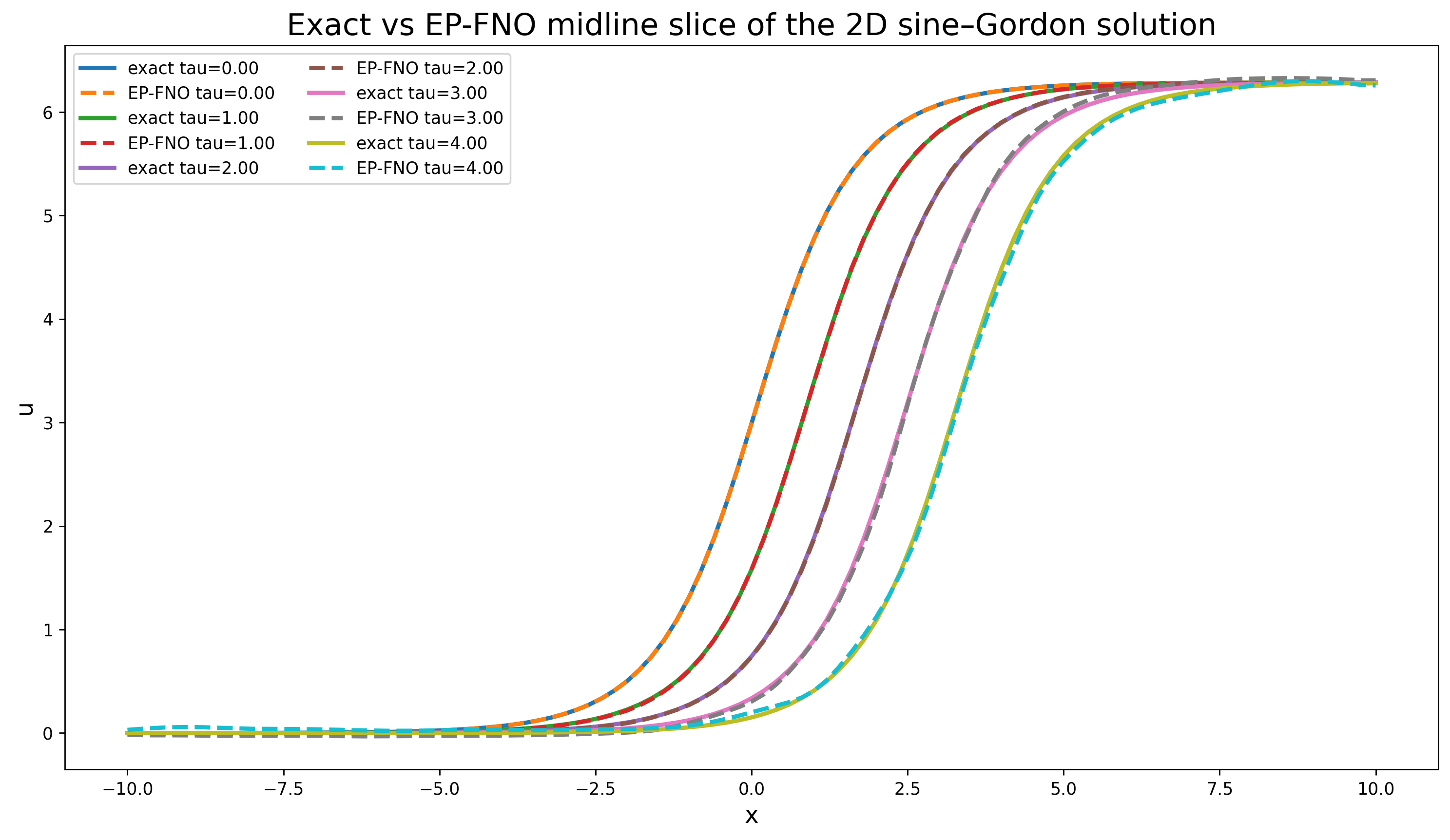}
    \caption{Exact and EP-FNO midline profiles}
\end{subfigure}
\caption{Two-dimensional sine--Gordon experiment. 
    (a) EP-FNO prediction of the spatiotemporal solution profile. 
    (b) Midline comparison between the exact solution and the EP-FNO prediction.}
\label{fig:sgsurfandmid}

\end{figure}

Although the sine--Gordon equation is Hamiltonian at the continuous level, the Hamiltonian computed from our reference data is not exactly conserved \cite{johnson2012new}, so we use the spatial  gradient energy as a diagnostic for long-time stability. 
Figure~\ref{fig:sgsurfandmid} shows that the EP-FNO prediction captures the motion of the domain wall and preserves the main transition structure of the displacement field \(u\). The predicted interface remains well aligned with the reference solution, indicating that the Hamiltonian projection does not distort the qualitative wave dynamics.

\begin{figure}[htbp]

        \centering
        \includegraphics[width=0.5\linewidth]{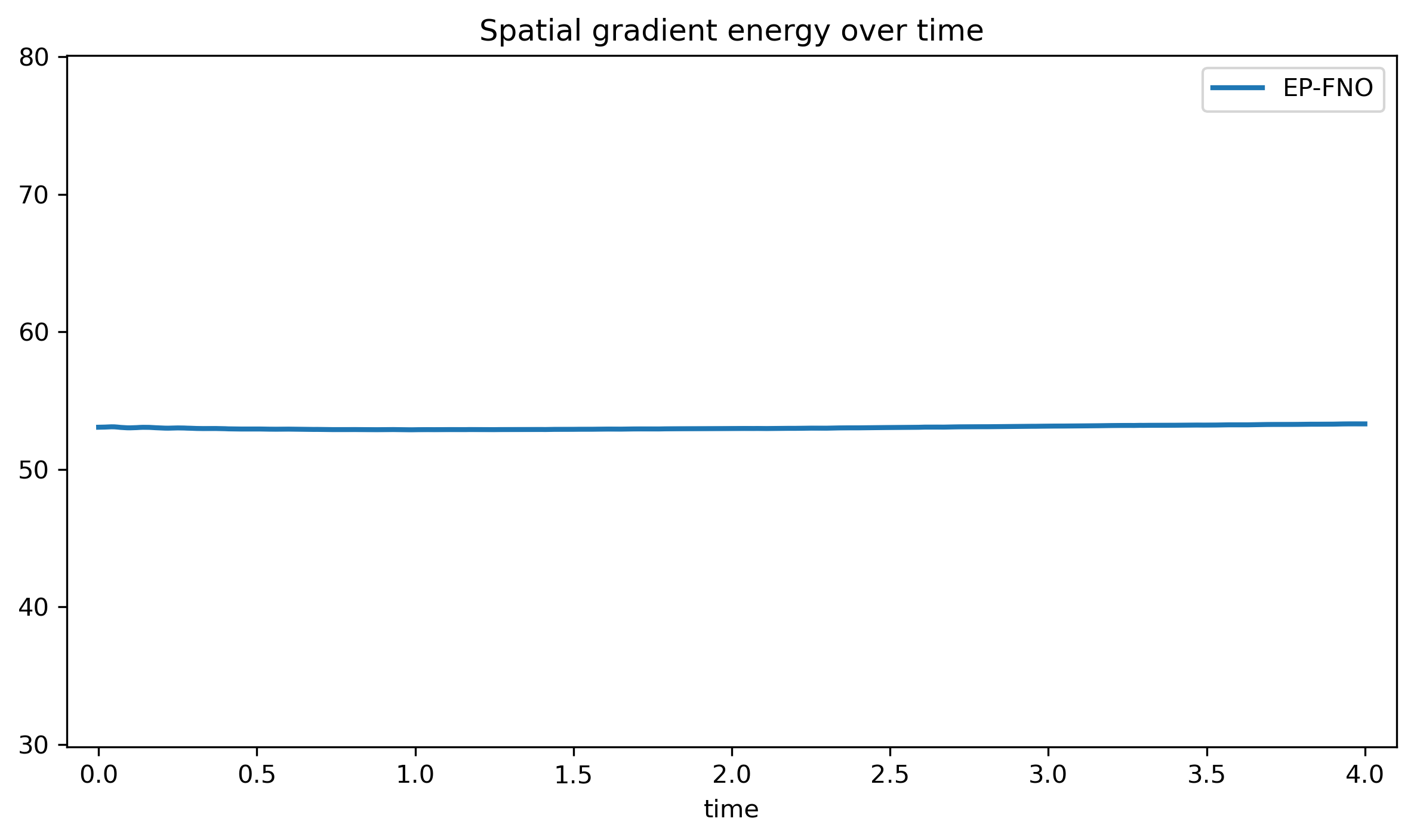}
        \caption{Spatial Gradient Energy plot over time. }
    \label{fig:sggradientenergy}
\end{figure}
Figure~\ref{fig:sggradientenergy} shows the corresponding gradient energy evolution during the rollout, controlling the Hamiltonian drift is important for stable long-time prediction.  The EP-FNO prediction maintains a more
stable gradient-energy profile, indicating that the learned solution preserves
the sharp domain-wall transition without excessive smoothing or artificial
oscillation.

\subsection{Ablation Study}

\subsubsection{Training and Validation Diagnostics}
\label{subsec:ablation_study}

We first examine the contribution of the two main architectural components: the residual time-step update and the invariant projection. We compare three models on the ZK line-soliton benchmark: a standard FNO, the proposed EP-FNO, and an EP-FNO variant in which the residual update is removed while the projection mechanism is retained. 

\begin{figure}[htbp]
\centering
\begin{subfigure}{0.48\textwidth}
    \centering
    \includegraphics[width=\linewidth]{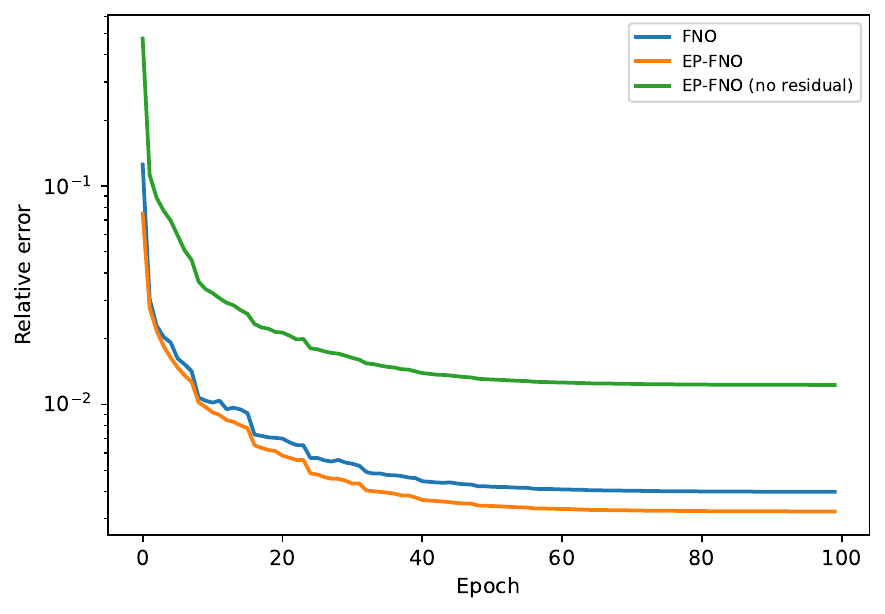}
    \caption{
    }
\end{subfigure}
\hfill
\begin{subfigure}{0.48\textwidth}
    \centering
    \includegraphics[width=\linewidth]{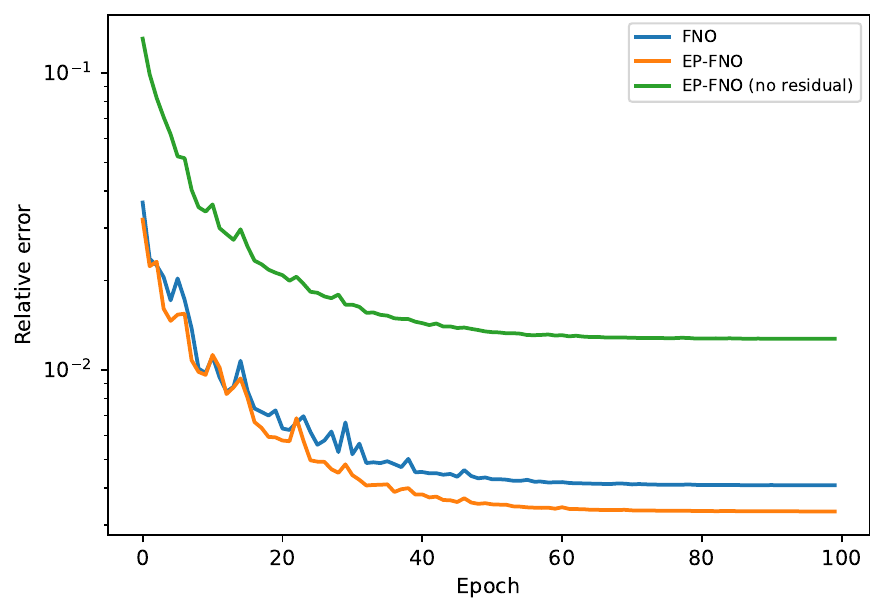}
    \caption{
    }
\end{subfigure}
\vspace{2mm}
\caption{Training (Fig. (a)) and validation (Fig. (b)) relative errors as a function of epochs for FNO, EP-FNO, and EP-FNO without the residual update on the ZK line-soliton problem.}
\label{fig:zk-line-train-eval}
\end{figure}

For the ZK line-soliton problem, EP-FNO attains a best validation error of $3.33\times10^{-3}$, improving upon the baseline FNO error of $4.08\times10^{-3}$. The non-residual EP-FNO variant reaches a substantially higher error of $1.27\times10^{-2}$. 
The non-residual EP-FNO—projection without the residual formulation—consistently underperforms both EP-FNO and the baseline FNO. This indicates that the projection step alone cannot compensate for the lack of a suitable time-stepping inductive bias. Conversely, the residual update provides a natural way to learn small increments in wave propagation, but without projection the model still suffers from invariant drift over long rollouts. Only the combination of the two yields the best accuracy and stability.

The training and validation curves in Figure~\ref{fig:zk-line-train-eval} also show consistent convergence behavior. The residual EP-FNO exhibits stable error reduction, while the non-residual variant plateaus at a higher error level.

In summary, the empirical evidence on the ZK line-soliton benchmark demonstrates that both components of the proposed method contribute to performance. The residual neural-operator update supplies a powerful inductive bias for learning wave dynamics, and the energy projection complements it by enforcing physical invariants. Their combination yields the most accurate and stable model among those tested.

\subsubsection{Cost--Accuracy Tradeoff}
\label{subsec:cost_accuracy_tradeoff}

To further highlight the efficiency of our method, we compare the cost--accuracy behavior of FNO and EP-FNO for the Zakharov--Kuznetsov experiment.
Since the EP-FNO uses the FNO framework, the additional cost of EP-FNO comes from the projection
step. For this experiment, we normalize the FNO cost to $1.00$ and compute the
EP-FNO cost from the measured runtime ratio at the common training horizon.
For ZK, the corresponding runtimes are $704.27$ s for FNO and $875.07$ s for
EP-FNO, giving a normalized EP-FNO cost of $875.07/704.27\approx 1.243$.

\begin{figure}[htbp]
\centering
\begin{subfigure}{0.50\textwidth}
\centering
\includegraphics[width=\linewidth]{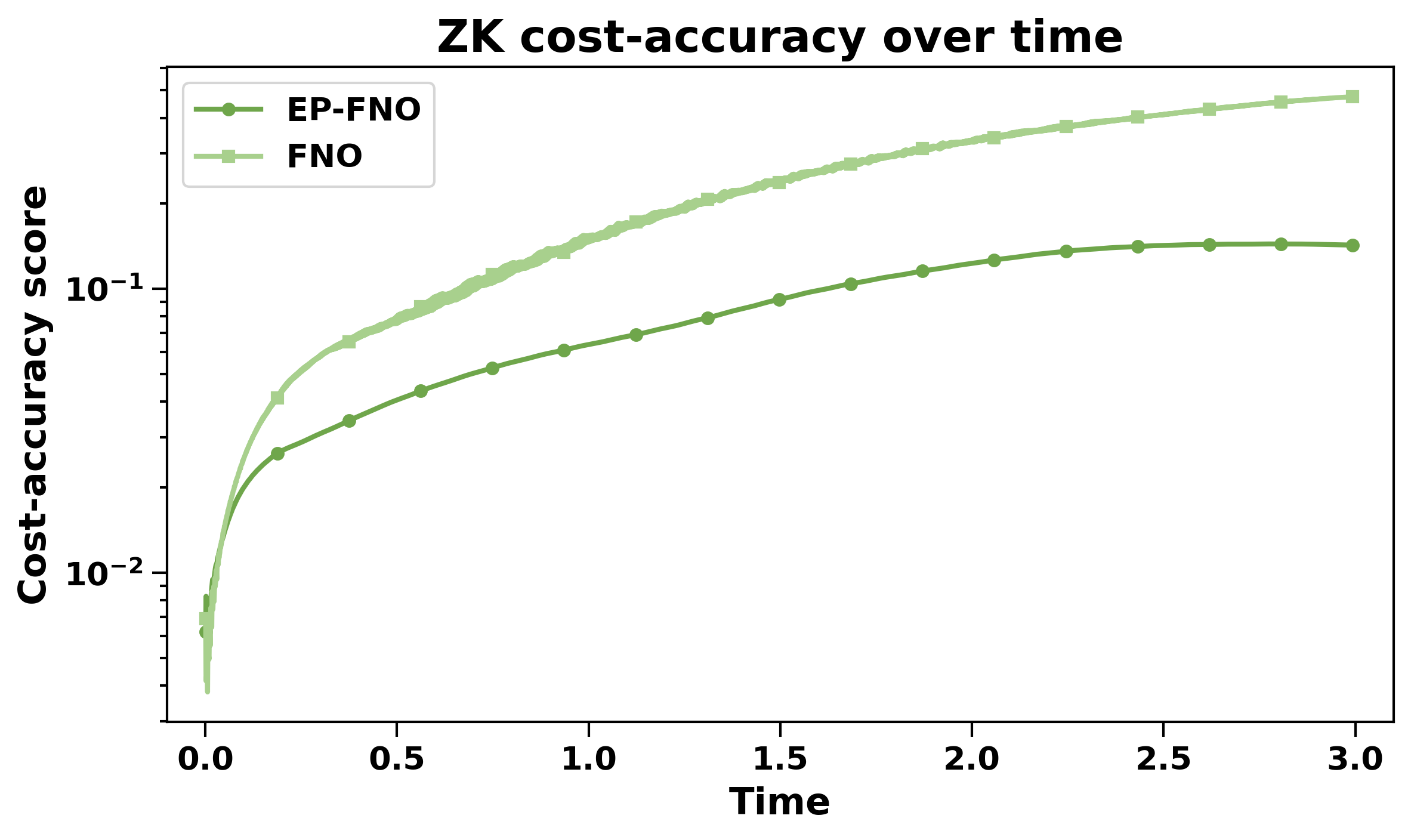}
\caption{ZK score over time.}
\label{fig:zk_cost_accuracy_time}
\end{subfigure}
\caption{Cost--accuracy comparison of FNO and EP-FNO. }
\label{fig:cost_accuracy_comparison}

\end{figure}

The cost--accuracy score is defined as the normalized cost multiplied by the
relative $L^2$ error. Smaller values therefore indicate a better trade-off
between computational cost and accuracy. Despite its projection overhead,
EP-FNO achieves a better final-time cost--accuracy score, with a $\mathbf{3.26\times}$
improvement over FNO as shown in Fig.~\ref{fig:cost_accuracy_comparison}.
The time-dependent cost--accuracy curves show that this advantage is maintained
over the long-time rollout.

\section{Discussion and Conclusion}
\label{sec:discussion}

\paragraph{Invariant structure.}
A structure-informed operator provides a natural way to improve 
long-time prediction of parametric Hamiltonian PDEs by making operators that are aware of conserved quantities.  A future direction is to develop more structure-based neural layers \cite{woodward2023mori} and
combine learned models with numerical solvers for dispersive PDEs. 

\paragraph{Model flexibility.}
The projection idea is not restricted to FNOs. Since it is applied as a
post-processing correction, the same structure-preserving strategy can be added
to other neural operators, recurrent models, or AI-based surrogate models.
\section*{Availability of Data and Code}
Data and code will be available upon publication

\section*{Conflict of Interest}
The authors declare that there are no conflicts of interest.

\section*{Funding}
Not applicable. The author received no specific funding for this work.

\section*{Author Contributions}
The authors are solely responsible for the conceptualization, methodology, software implementation, numerical experiments, analysis, and writing of this manuscript.

\bibliographystyle{plain}
\bibliography{ep_fno}

@article{chen2011multi,
  title={The multi-symplectic Fourier pseudospectral method for solving two-dimensional Hamiltonian PDEs},
  author={Chen, Yaming and Song, Songhe and Zhu, Huajun},
  journal={Journal of Computational and Applied Mathematics},
  volume={236},
  number={6},
  pages={1354--1369},
  year={2011},
  publisher={Elsevier}
}

@article{li2020fourier,
  title={Fourier neural operator for parametric partial differential equations},
  author={Li, Zongyi and Kovachki, Nikola and Azizzadenesheli, Kamyar and Liu, Burigede and Bhattacharya, Kaushik and Stuart, Andrew and Anandkumar, Anima},
  journal={arXiv preprint arXiv:2010.08895},
  year={2020}
}

@article{obieke2026energy,
  title={Energy Conserving Data Driven Discretizations for Maxwells Equations},
  author={Obieke, Victory},
  journal={arXiv preprint arXiv:2601.01902},
  year={2026}
}

@article{johnson2012new,
  title={New exact solutions for the sine-Gordon equation in 2+ 1 dimensions},
  author={Johnson, Suarez and Suarez, P and Biswas, A},
  journal={Computational Mathematics and Mathematical Physics},
  volume={52},
  number={1},
  pages={98--104},
  year={2012},
  publisher={Springer}
}

@article{kovachki2023neural,
  title={Neural operator: Learning maps between function spaces with applications to pdes},
  author={Kovachki, Nikola and Li, Zongyi and Liu, Burigede and Azizzadenesheli, Kamyar and Bhattacharya, Kaushik and Stuart, Andrew and Anandkumar, Anima},
  journal={Journal of Machine Learning Research},
  volume={24},
  number={89},
  pages={1--97},
  year={2023}
}

@article{tran2021factorized,
  title={Factorized fourier neural operators},
  author={Tran, Alasdair and Mathews, Alexander and Xie, Lexing and Ong, Cheng Soon},
  journal={arXiv preprint arXiv:2111.13802},
  year={2021}
}

@article{li2023fourier,
  title={Fourier neural operator with learned deformations for pdes on general geometries},
  author={Li, Zongyi and Huang, Daniel Zhengyu and Liu, Burigede and Anandkumar, Anima},
  journal={Journal of Machine Learning Research},
  volume={24},
  number={388},
  pages={1--26},
  year={2023}
}

@article{greydanus2019hamiltonian,
  title={Hamiltonian neural networks},
  author={Greydanus, Samuel and Dzamba, Misko and Yosinski, Jason},
  journal={Advances in neural information processing systems},
  volume={32},
  year={2019}
}

@article{jin2020sympnets,
  title={SympNets: Intrinsic structure-preserving symplectic networks for identifying Hamiltonian systems},
  author={Jin, Pengzhan and Zhang, Zhen and Zhu, Aiqing and Tang, Yifa and Karniadakis, George Em},
  journal={Neural Networks},
  volume={132},
  pages={166--179},
  year={2020},
  publisher={Elsevier}
}

@article{raissi2019physics,
  title={Physics-informed neural networks: A deep learning framework for solving forward and inverse problems involving nonlinear partial differential equations},
  author={Raissi, Maziar and Perdikaris, Paris and Karniadakis, George E},
  journal={Journal of Computational physics},
  volume={378},
  pages={686--707},
  year={2019},
  publisher={Elsevier}
}

@article{kissas2020machine,
  title={Machine learning in cardiovascular flows modeling: Predicting arterial blood pressure from non-invasive 4D flow MRI data using physics-informed neural networks},
  author={Kissas, Georgios and Yang, Yibo and Hwuang, Eileen and Witschey, Walter R and Detre, John A and Perdikaris, Paris},
  journal={Computer methods in applied mechanics and engineering},
  volume={358},
  pages={112623},
  year={2020},
  publisher={Elsevier}
}

@article{carlberg2013gnat,
  title={The GNAT method for nonlinear model reduction: effective implementation and application to computational fluid dynamics and turbulent flows},
  author={Carlberg, Kevin and Farhat, Charbel and Cortial, Julien and Amsallem, David},
  journal={Journal of Computational Physics},
  volume={242},
  pages={623--647},
  year={2013},
  publisher={Elsevier}
}

@article{chen2001multi,
  title={Multi-symplectic Fourier pseudospectral method for the nonlinear Schr{\"o}dinger equation},
  author={Chen, Jing-Bo and Qin, Meng-Zhao},
  journal={Electron. Trans. Numer. Anal},
  volume={12},
  pages={193--204},
  year={2001}
}

@article{burby2021normal,
  title={Normal stability of slow manifolds in nearly periodic Hamiltonian systems},
  author={Burby, Joshua William and Hirvijoki, Eero},
  journal={Journal of Mathematical Physics},
  volume={62},
  number={9},
  year={2021},
  publisher={AIP Publishing}
}

@misc{dipersio2025porthamiltonian,
      title={Port-Hamiltonian Neural Networks: From Theory to Simulation of Interconnected Stochastic Systems}, 
      author={Luca Di Persio and Matthias Ehrhardt and Youness Outaleb and Sofia Rizzotto},
      year={2025},
      eprint={2509.06674},
      archivePrefix={arXiv},
      primaryClass={math-ph},
      url={https://arxiv.org/abs/2509.06674}, 
}

@misc{zhong2024symp,
      title={Symplectic ODE-Net: Learning Hamiltonian Dynamics with Control}, 
      author={Yaofeng Desmond Zhong and Biswadip Dey and Amit Chakraborty},
      year={2024},
      eprint={1909.12077},
      archivePrefix={arXiv},
      primaryClass={cs.LG},
      url={https://arxiv.org/abs/1909.12077}, 
}

@article{najera2023structure,
  title={A structure-preserving neural differential operator with embedded Hamiltonian constraints for modeling structural dynamics},
  author={Najera-Flores, David A and Todd, Michael D},
  journal={Computational Mechanics},
  volume={72},
  number={2},
  pages={241--252},
  year={2023},
  publisher={Springer}
}

@misc{cranmer2020lagrangian,
      title={Lagrangian Neural Networks}, 
      author={Miles Cranmer and Sam Greydanus and Stephan Hoyer and Peter Battaglia and David Spergel and Shirley Ho},
      year={2020},
      eprint={2003.04630},
      archivePrefix={arXiv},
      primaryClass={cs.LG},
      url={https://arxiv.org/abs/2003.04630}, 
}

@article{woodward2023mori,
  title={Mori-Zwanzig mode decomposition: Comparison with time-delay embeddings},
  author={Woodward, Michael and Lin, Yen Ting and Tian, Yifeng and Hader, Christoph and Fasel, Hermann and Livescu, Daniel},
  journal={arXiv preprint arXiv:2311.09524},
  year={2023}
}

@inproceedings{chen2022kam,
  title={Kam theory meets statistical learning theory: Hamiltonian neural networks with non-zero training loss},
  author={Chen, Yuhan and Matsubara, Takashi and Yaguchi, Takaharu},
  booktitle={Proceedings of the AAAI Conference on Artificial Intelligence},
  volume={36},
  pages={6322--6332},
  year={2022}
}

@article{zhang2022vanishing,
  title={Vanishing Hall conductance for commuting Hamiltonians},
  author={Zhang, Carolyn and Levin, Michael and Bachmann, Sven},
  journal={Physical Review B},
  volume={105},
  number={8},
  pages={L081103},
  year={2022},
  publisher={APS}
}

@article{bokil2018highFDTD,
  author  = {Bokil, Vrushali A. and Gibson, Nathan L.},
  title   = {Analysis of Spatial High-Order Finite Difference Methods for Maxwell's Equations in Dispersive Media},
  journal = {IMA Journal of Numerical Analysis},
  volume  = {32},
  number  = {3},
  pages   = {926--956},
  year    = {2012},
  doi     = {10.1093/imanum/drr001},
  note    = {First published online 2011}
}

@article{banks2009finiteElement,
  author  = {Banks, H. T. and Bokil, Vrushali A. and Gibson, Nathan L.},
  title   = {Analysis of Stability and Dispersion in a Finite Element Method for Debye and Lorentz Dispersive Media},
  journal = {Numerical Methods for Partial Differential Equations},
  volume  = {25},
  number  = {4},
  pages   = {885--917},
  year    = {2009},
  doi     = {10.1002/num.20379}
}

@book{monk2003finiteElement,
  author    = {Monk, Peter},
  title     = {Finite Element Methods for Maxwell's Equations},
  publisher = {Oxford University Press},
  address   = {Oxford},
  year      = {2003},
  series    = {Numerical Mathematics and Scientific Computation},
  isbn      = {9780198508885}
}

@book{trefethen2000spectral,
  author    = {Trefethen, Lloyd N.},
  title     = {Spectral Methods in MATLAB},
  publisher = {Society for Industrial and Applied Mathematics},
  address   = {Philadelphia, PA},
  year      = {2000},
  doi       = {10.1137/1.9780898719598}
}

@book{fornberg1996pseudospectral,
  author    = {Fornberg, Bengt},
  title     = {A Practical Guide to Pseudospectral Methods},
  publisher = {Cambridge University Press},
  address   = {Cambridge},
  year      = {1996},
  series    = {Cambridge Monographs on Applied and Computational Mathematics},
  doi       = {10.1017/CBO9780511626357}
}

@article{bridges2001multisymplecticSpectral,
  author  = {Bridges, Thomas J. and Reich, Sebastian},
  title   = {Multi-Symplectic Spectral Discretizations for the Zakharov--Kuznetsov and Shallow Water Equations},
  journal = {Physica D: Nonlinear Phenomena},
  volume  = {152--153},
  pages   = {491--504},
  year    = {2001},
  doi     = {10.1016/S0167-2789(01)00188-9}
}

@article{oguadimma2026foundational,
  author  = {Oguadimma, Emmanuel E. and Elbarkawy, Mohamed A. F. and Oranugo, Dominic O. and Salem, Heba E. and Bayram, Mustafa and Obulezi, Okechukwu J.},
  title   = {A Foundational Review of Ordinary Differential Equation Solution Methods and Their Inherent Symmetries},
  journal = {Boletim da Sociedade Paranaense de Matem{\'a}tica},
  volume  = {44},
  number  = {8},
  pages   = {1--27},
  year    = {2026},
  doi     = {10.5269/bspm.80626}
}

@article{Yee1966,
  author  = {Yee, Kane S.},
  title   = {Numerical Solution of Initial Boundary Value Problems Involving Maxwell's Equations in Isotropic Media},
  journal = {IEEE Transactions on Antennas and Propagation},
  volume  = {14},
  number  = {3},
  pages   = {302--307},
  year    = {1966},
  doi     = {10.1109/TAP.1966.1138693}
}

@book{whitham1974linear,
  author    = {Whitham, G. B.},
  title     = {Linear and Nonlinear Waves},
  publisher = {Wiley-Interscience},
  address   = {New York},
  year      = {1974}
}

@book{drazin1989solitons,
  author    = {Drazin, P. G. and Johnson, R. S.},
  title     = {Solitons: An Introduction},
  series    = {Cambridge Texts in Applied Mathematics},
  publisher = {Cambridge University Press},
  address   = {Cambridge},
  year      = {1989}
}

@book{sulem1999nonlinear,
  author    = {Sulem, Catherine and Sulem, Pierre-Louis},
  title     = {The Nonlinear Schr{\"o}dinger Equation: Self-Focusing and Wave Collapse},
  series    = {Applied Mathematical Sciences},
  volume    = {139},
  publisher = {Springer},
  address   = {New York},
  year      = {1999},
  doi       = {10.1007/b98958}
}

@article{morrison2005hamiltonian,
  author  = {Morrison, P. J.},
  title   = {Hamiltonian and action principle formulations of plasma physics},
  journal = {Physics of Plasmas},
  volume  = {12},
  number  = {5},
  pages   = {058102},
  year    = {2005},
  doi     = {10.1063/1.1882353}
}

@article{morrison1980noncanonical,
  author  = {Morrison, P. J. and Greene, J. M.},
  title   = {Noncanonical Hamiltonian Density Formulation of Hydrodynamics and Ideal Magnetohydrodynamics},
  journal = {Physical Review Letters},
  volume  = {45},
  number  = {10},
  pages   = {790--794},
  year    = {1980},
  doi     = {10.1103/PhysRevLett.45.790}
}

@article{morrison1998hamiltonian,
  author  = {Morrison, P. J.},
  title   = {Hamiltonian description of the ideal fluid},
  journal = {Reviews of Modern Physics},
  volume  = {70},
  number  = {2},
  pages   = {467--521},
  year    = {1998},
  doi     = {10.1103/RevModPhys.70.467}
}

@book{boyd2020nonlinear,
  author    = {Boyd, Robert W.},
  title     = {Nonlinear Optics},
  edition   = {4},
  publisher = {Academic Press},
  year      = {2020},
  isbn      = {978-0-12-811002-7}
}

@book{agrawal2019nonlinear,
  author    = {Agrawal, Govind P.},
  title     = {Nonlinear Fiber Optics},
  edition   = {6},
  publisher = {Academic Press},
  year      = {2019},
  isbn      = {978-0-12-817042-7}
}

@misc{dauner2024residual,
  title        = {Residual Factorized Fourier Neural Operator for simulation of three-dimensional turbulence},
  author       = {Dauner, Maximilian and Bheemanakone, Ankith and Z{\"o}nnchen, Benedikt and Socher, Gudrun},
  year         = {2024},
  note         = {Submitted to ICLR 2024},
  url          = {https://openreview.net/forum?id=yGdoTL9g18}
}

@article{valente2025physics,
  title   = {Physics-consistent machine learning with output projection onto physical manifolds},
  author  = {Valente, Matilde and Dias, Tiago C. and Guerra, Vasco and Ventura, Rodrigo},
  journal = {Communications Physics},
  volume  = {8},
  number  = {433},
  year    = {2025},
  doi     = {10.1038/s42005-025-02329-1}
}

@misc{liu2025adaptive,
  title         = {Adaptive Correction for Ensuring Conservation Laws in Neural Operators},
  author        = {Liu, Chaoyu and Li, Yangming and Deng, Zhongying and Budd, Chris and Sch{\"o}nlieb, Carola-Bibiane},
  year          = {2025},
  eprint        = {2505.24579},
  archivePrefix = {arXiv},
  primaryClass  = {cs.LG},
  doi           = {10.48550/arXiv.2505.24579}
}

@inproceedings{duruisseaux2024hard,
  title     = {Towards Enforcing Hard Physics Constraints in Operator Learning Frameworks},
  author    = {Duruisseaux, Valentin and Liu-Schiaffini, Miguel and Berner, Julius and Anandkumar, Anima},
  booktitle = {ICML 2024 Workshop on AI for Science},
  year      = {2024},
  url       = {https://openreview.net/forum?id=Zvxm14Rd1F}
}

@article{cardosobihlo2025exactly,
  title   = {Exactly conservative physics-informed neural networks and deep operator networks for dynamical systems},
  author  = {Cardoso-Bihlo, Elsa and Bihlo, Alex},
  journal = {Neural Networks},
  volume  = {181},
  pages   = {106826},
  year    = {2025},
  doi     = {10.1016/j.neunet.2024.106826}
}

@article{kovachki2021universal,
  title={On universal approximation and error bounds for Fourier neural operators},
  author={Kovachki, Nikola and Lanthaler, Samuel and Mishra, Siddhartha},
  journal={Journal of Machine Learning Research},
  volume={22},
  number={290},
  pages={1--76},
  year={2021}
}

@article{obieke2025structure,
  title={Structure-Preserving Physics-Informed Neural Network for the Korteweg--de Vries (KdV) Equation},
  author={Obieke, Victory and Oguadimma, Emmanuel},
  journal={arXiv preprint arXiv:2511.00418},
  year={2025}
}

@article{gibson2026analysis,
  title={Analysis of Nonlinear Random Polarization in Dispersive Dielectrics},
  author={Gibson, Nathan L and Oguadimma, Emmanuel E},
  journal={arXiv preprint arXiv:2606.27445},
  year={2026}
}

\appendix

\renewcommand{\thesubsection}{\Alph{section}.\arabic{subsection}}

\section{Supplementary Results}

\subsection{Zakharov--Kuznetsov Equation}\label{appenZK}
\begin{figure}[htpb]
    \centering
    \includegraphics[width=0.7\linewidth]{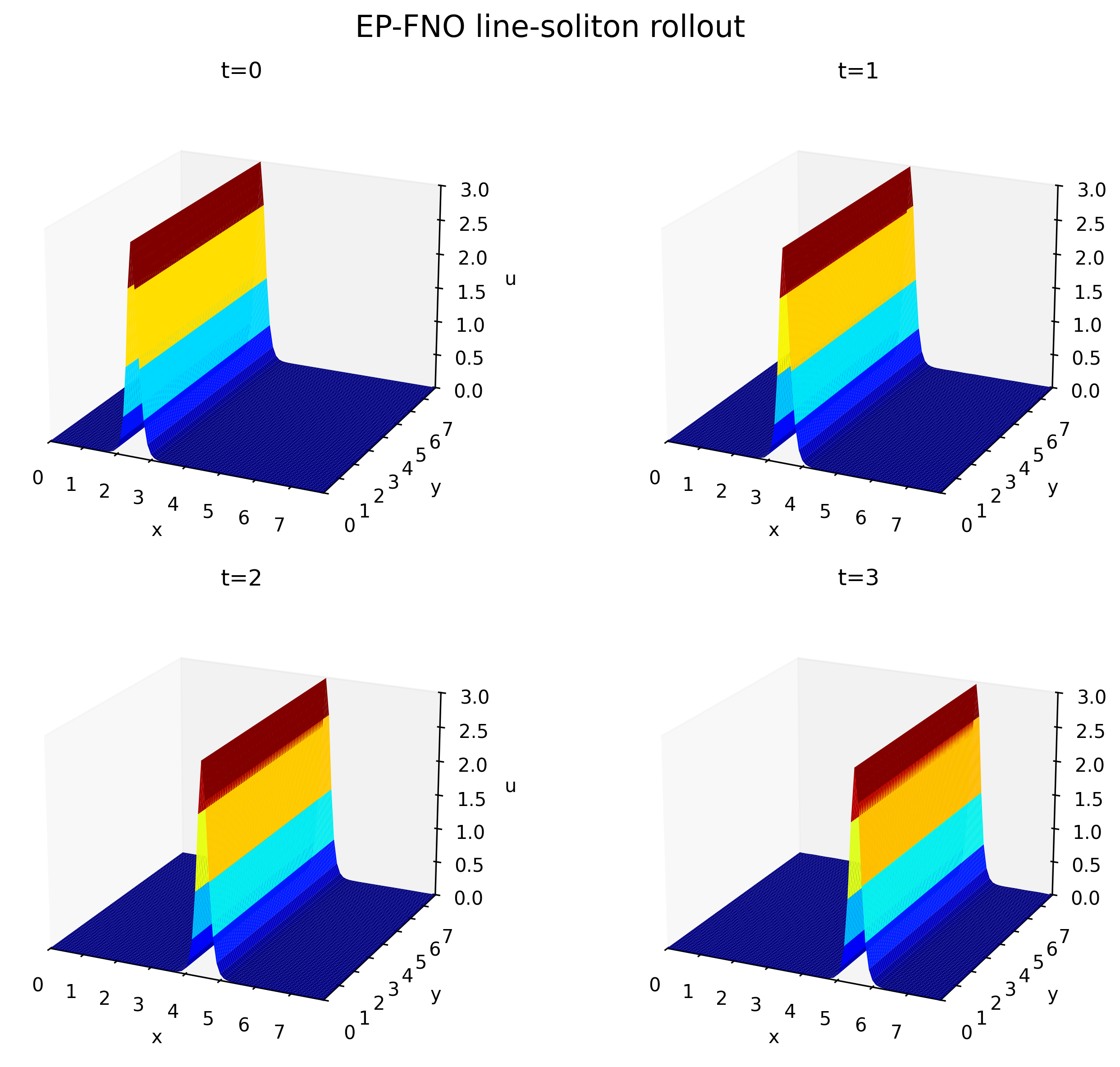}
        \caption{ZK line Experiment}
        \label{fig:zk_line}
\end{figure}

\begin{figure}[htbp]
    \centering
    \begin{subfigure}{0.48\textwidth}
        \centering
        \includegraphics[width=\linewidth]{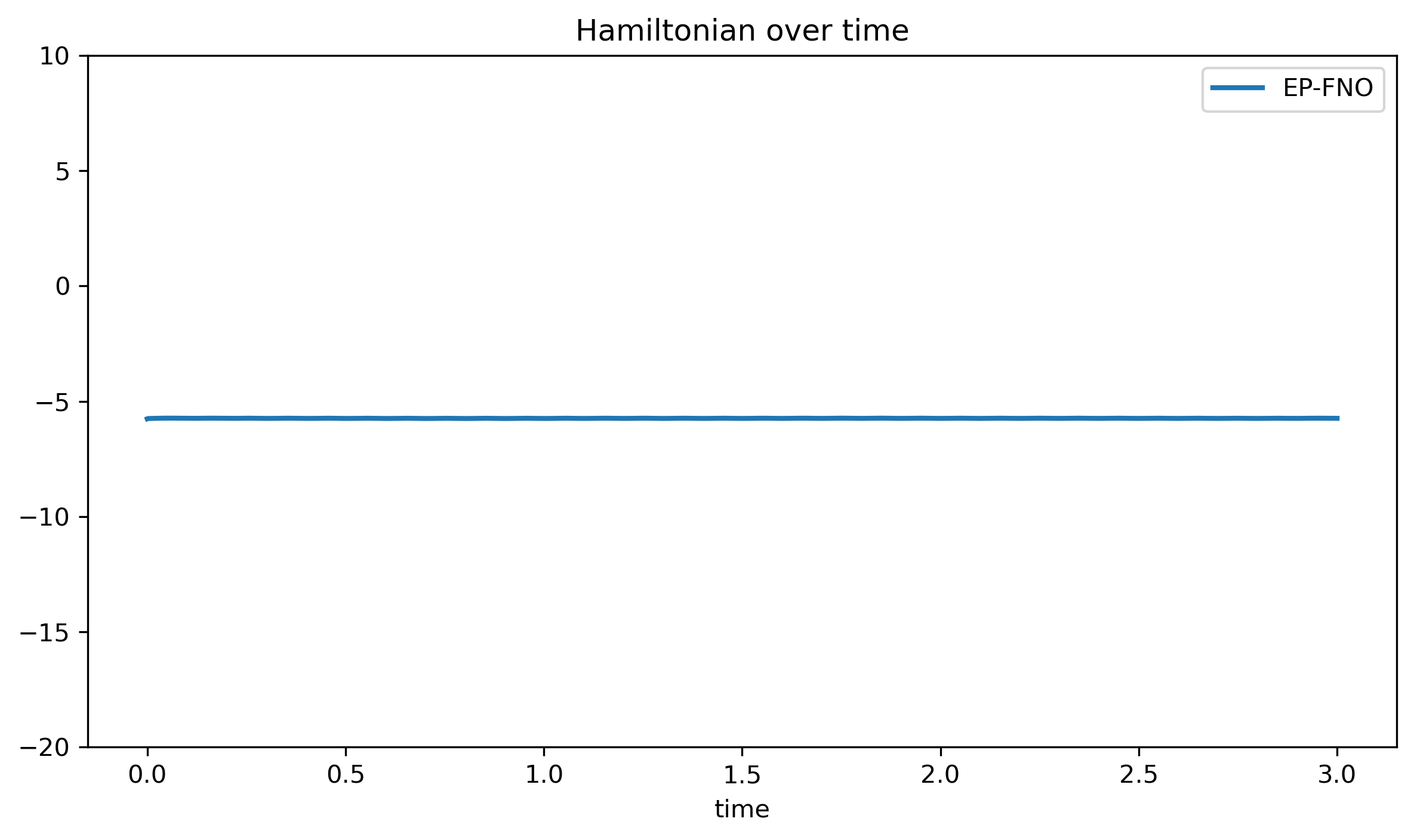}
        \caption{}
        \label{fig:zklinelh}
    \end{subfigure}
    \hfill
     \begin{subfigure}{0.48\textwidth}
        \centering
        \includegraphics[width=\linewidth]{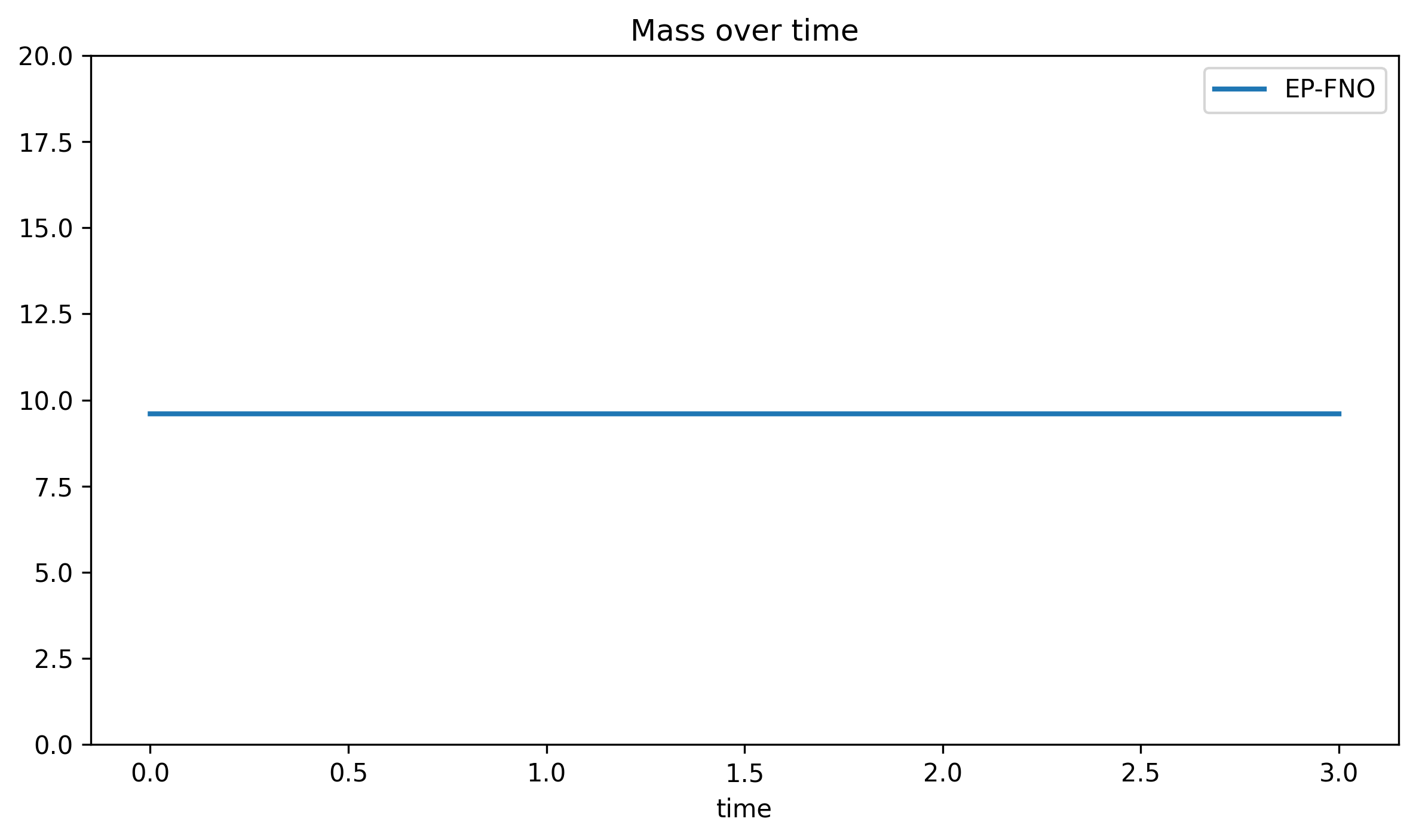}
        \caption{}
        \label{fig:zklinecycl}
    \end{subfigure}

    \caption{ Hamiltonian and mass of the ZK line Experiment EP-FNO}
    \label{fig:zklineHandm}
\end{figure}

\subsection{FNO Results}
\label{appenKp}

To further highlight the motivation for the proposed EP-FNO framework, we report the corresponding FNO results for the parametric KP equation. Among the models considered, the KP equation is particularly challenging because it combines nonlinear wave propagation, higher-order dispersion, and anisotropic two-dimensional dynamics. Small phase and speed errors can accumulate rapidly during long autoregressive rollouts, especially for line-soliton solutions. This makes the KP equation a useful test case for assessing the limitations of the standard FNO and the advantages of the energy-preserving approach.
\begin{figure}[htpb]
    \centering
    \includegraphics[width=0.7\linewidth]{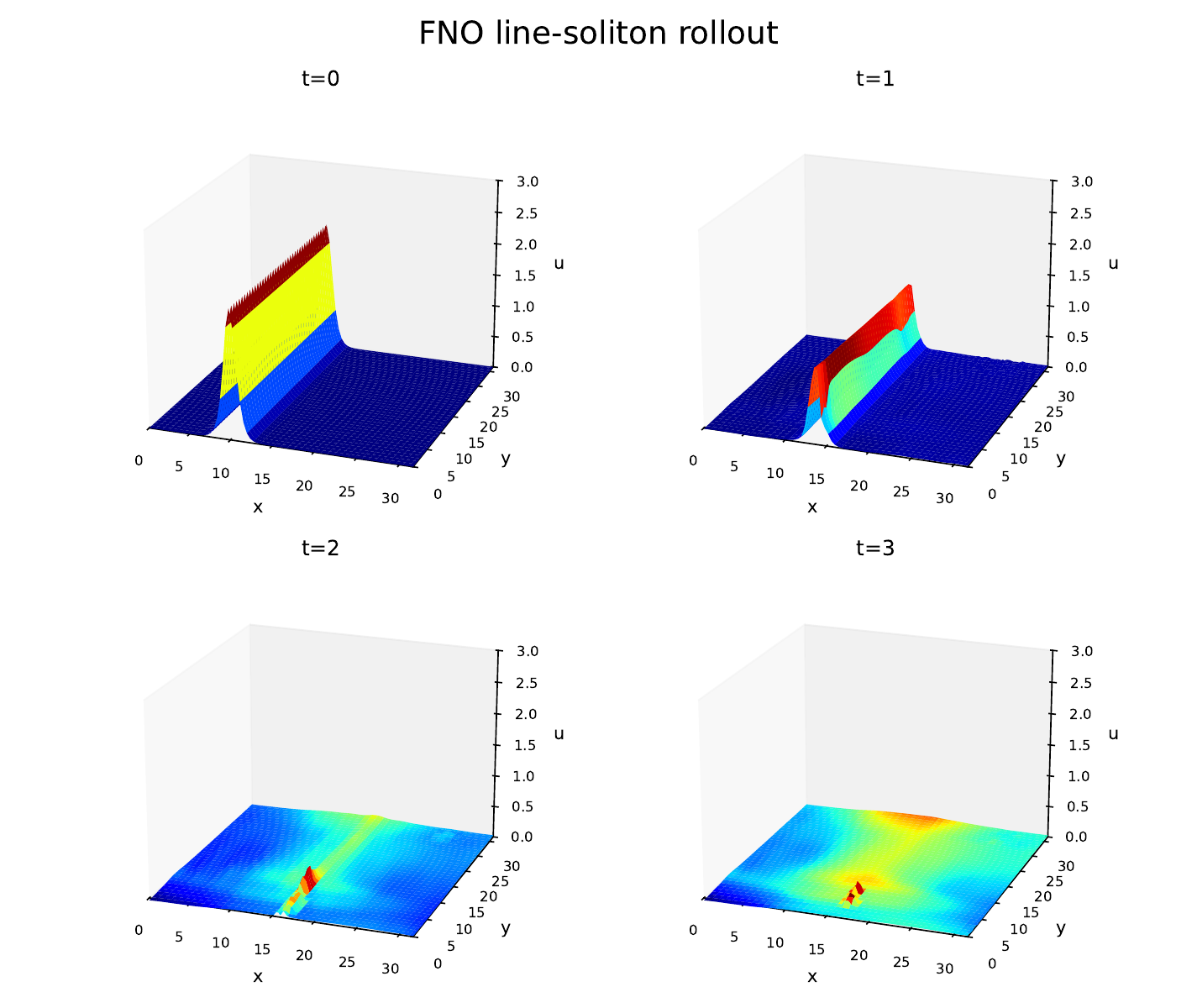}
        \caption{FNO rollout for the KP line-soliton benchmark at representative times \(t=0,1,2,\) and \(3\).}
        \label{fig:kp_fno_surface_result1}
\end{figure}

Figure~\ref{fig:kp_fno_surface_result1} shows the corresponding FNO rollout for the KP line-soliton benchmark. In contrast to EP-FNO, the predicted solution progressively departs from the characteristic line-soliton profile during autoregressive prediction. By (t=2) and (t=3), substantial distortion of the wave structure is observed.

\begin{figure}[H]
    \centering
    \begin{subfigure}{0.48\textwidth}
        \centering
        \includegraphics[width=\linewidth]{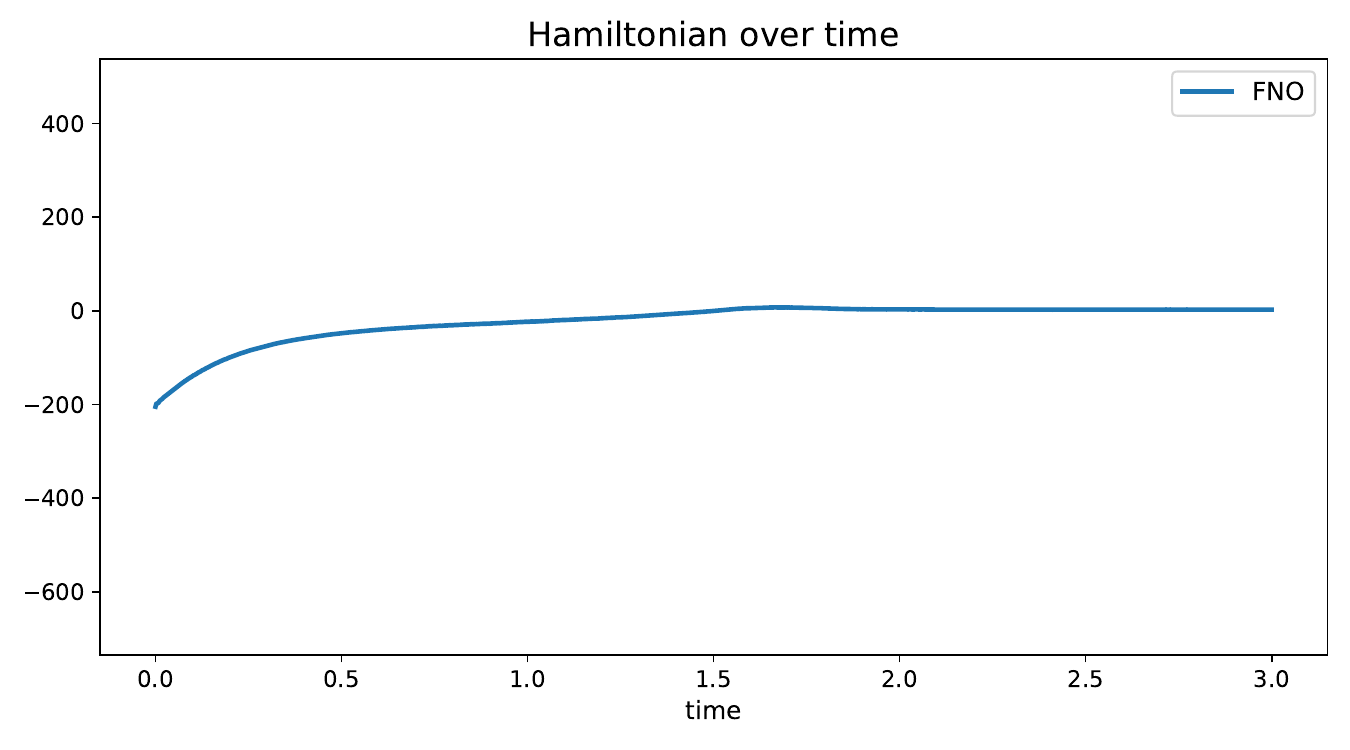}
        \caption{}
        \label{fig:kplinelh}
    \end{subfigure}
    \hfill
     \begin{subfigure}{0.48\textwidth}
        \centering
        \includegraphics[width=\linewidth]{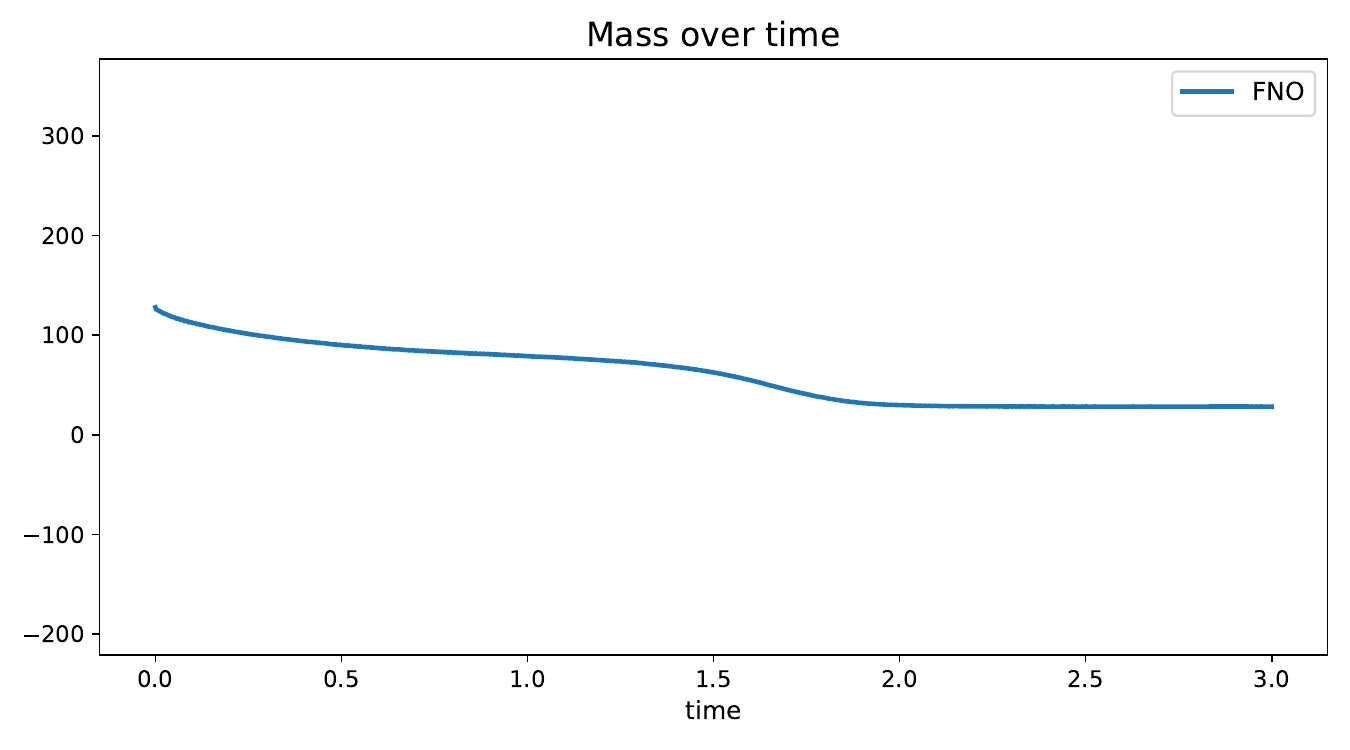}
        \caption{}
        \label{fig:kplinem}
    \end{subfigure}

    \caption{Evolution of the Hamiltonian and mass during the FNO rollout for the KP line-soliton benchmark.}
    \label{fig:kpfnoHandm}
\end{figure}

Figure~\ref{fig:kpfnoHandm} shows the evolution of the Hamiltonian and mass during the same rollout. Both quantities exhibit significant drift from their initial values, indicating that the standard FNO does not preserve the underlying invariants of the KP equation. The loss of invariant preservation is consistent with the degradation observed in the predicted solution and provides additional evidence for the benefits of the structure-preserving projection employed by EP-FNO.

\end{document}